\documentclass[10pt,A4paper]{article}
\usepackage{amssymb}
\usepackage{amsmath}
\usepackage{amsthm}
\usepackage{latexsym}
\usepackage[dvips]{epsfig}
\usepackage{enumerate}
\usepackage{mathrsfs}
\usepackage{eufrak}
\usepackage{bm}
\usepackage{tikz}
\usepackage{authblk}

\theoremstyle{plain}
\newtheorem{proposition}{Proposition}
\newtheorem{lemma}{Lemma}
\newtheorem{theorem}{Theorem}
\newtheorem{assumption}{Assumption}

\newtheorem{definition}{Definition}

\setlength{\textwidth}{148mm}           
\setlength{\textheight}{235mm}          
\setlength{\topmargin}{-5mm}            
\setlength{\oddsidemargin}{5mm}         
\setlength{\evensidemargin}{5mm}



\def\bmg{{\bm g}}



\def\bmF{{\bm F}}




\def\bmpartial{{\bm \partial}}


\def\bari{{\bar{\iota}}}
\def\baro{{\bar{o}}}

\newcounter{mnotecount}

\newcommand{\mnotex}[1]
{\protect{\stepcounter{mnotecount}}$^{\mbox{\footnotesize $\bullet$\themnotecount}}$ 
\marginpar{
\raggedright\tiny\em
$\!\!\!\!\!\!\,\bullet$\themnotecount: #1} }

\begin{document}

\title{\textbf{Killing spinors as a characterisation of rotating black hole spacetimes}}

\author[,1,2]{Michael J. Cole \footnote{E-mail address:{\tt m.j.cole@qmul.ac.uk}}}
\author[,1]{Juan A. Valiente Kroon \footnote{E-mail address:{\tt j.a.valiente-kroon@qmul.ac.uk}}}
\affil[1]{School of Mathematical Sciences, Queen Mary, University of London,
Mile End Road, London E1 4NS, United Kingdom.}

\maketitle

\begin{abstract}
We investigate the implications of the existence of Killing spinors in
a spacetime. In particular, we show that in vacuum and electrovacuum a
Killing spinor, along with some assumptions on the associated Killing
vector in an asymptotic region, guarantees that the spacetime is
locally isometric to the Kerr or Kerr-Newman solutions. We show that
the characterisation of these spacetimes in terms of Killing spinors
is an alternative expression of characterisation results of Mars (Kerr)
and Wong (Kerr-Newman) involving restrictions on the Weyl curvature
and matter content.
\end{abstract}

\textbf{Keywords:} Spinorial methods, black holes, Killing spinors,
Kerr-Newman solution, invariant characterisations

\medskip
\textbf{PACS:} 04.20.Jb, 04.20.Gz, 04.40.Nr, 04.20.Ha

\section{Introduction}

The \emph{Kerr spacetime}, describing a rotating black hole in vacuum, is one
of the most interesting exact solutions to the Einstein field
equations. As well as having physical relevance, the existence of
various incarnations of uniqueness theorems (see e.g. \cite{ChrCos12}
and references within for a survey of this vast topic) has cemented its place as one of the most
important vacuum solutions mathematically. There also exist
generalisations to spacetimes containing restricted forms of matter
---for example, the Kerr-Newman solution to the Einstein-Maxwell
equations. Although less physically interesting than the vacuum case,
these solutions still retain many interesting features of the Kerr
solution, including uniqueness under further assumptions on the matter
content. Thus, these generalisations still retain a mathematical
importance.

The remarks in the previous paragraph justify the attention given to
finding \emph{characterisations of the Kerr spacetime} and its
relations ---see e.g. \cite{Mar99,FerSae09}. Such characterisations
can be used to study various open questions about these black hole
spacetimes. For example, they can be used to reformulate uniqueness
theorems and clarify relations between them; study the stability of
the solutions, by indicating the behaviour of perturbations; and
illustrate the special characteristics of these particular solutions,
in particular through the use of symmetries ---see
e.g. \cite{AndBaeBlu15} for a recent discussion on these and related ideas. The last of these is
elegantly achieved through the use of \emph{Killing spinors}. Closely
related to \emph{Killing-Yano tensors}, these spinorial objects
represent ``hidden symmetries'' of the spacetime, which cannot be
represented using Killing vectors. It has been shown previously (see
\cite{BaeVal10b,BaeVal11b,BaeVal12}) that a vacuum spacetime admitting
a Killing spinor, along with conditions on the Weyl curvature and an
asymptotic condition, must be isometric to the Kerr spacetime. This
result crucially depends on a result of Mars (see \cite{Mar00}) which
uses the structure of the Weyl tensor, and its relation to the
Killing vectors of the spacetime, to characterise the Kerr solution in
a way that exploits to the maximum possible extent the asymptotic flatness of
the spacetime ---more precisely, it is required that the selfdual
Killing form of the stationary Killing vector is an eigenform of the
selfdual Weyl tensor.

The characterisation of the Kerr spacetime by Mars given in
\cite{Mar00} relies on a previous characterisation of this solution to
the vacuum Einstein field equations 
in terms of the vanishing of the so-called \emph{Mars-Simon tensor}
---see \cite{Mar99}. Interestingly, the latter characterisation has
been generalised to the electrovacuum case by Wong \cite{Won09}
assuming some restrictions on the matter content. This
characterisation is not optimal ---in the sense that it assumes the
existence of certain relations among the relevant geometric objects;
by contrast, in \cite{Mar99}, the existence of the vacuum counterpart
of these relations is a consequence of the
characterisation. Nevertheless, as a consequence of the analysis in
\cite{Won09}, one may expect that the Kerr-Newman solution can be
characterised by the use of Killing spinors in a similar way to the
vacuum case. The characterisations in both \cite{Mar99} and
\cite{Won09} come in both a \emph{local version} (in which certain
constraints arising in the characterisation are fixed by evaluating them at finite points of the
manifold) and a \emph{global version} (in which asymptotic flatness is
used to fix the value of the constants). Remarkably, the generalisation of the characterisation in
\cite{Mar00} to the electrovacuum case has, so far, not been obtained.

The purpose of this article is to revisit the characterisation of the
Kerr spacetime using Killing spinors and then generalise to the
electrovacuum case using Wong's result in \cite{Won09}. Our analysis
suggests that Wong's result  can be strengthened to obtain a
characterisation of the Kerr-Newman spacetime more in the spirit of
Mars's original result in \cite{Mar99} and, in turn, use this result
to obtain a generalisation of the analysis of \cite{Mar00} in which the
Kerr-Newman spacetime is characterised in an optimal way by a
combination of local and global assumptions.

\medskip
\noindent
\textbf{Outline of the article.} This paper is organised as
follows. In section \ref{KS} we give an introduction to Killing
spinors, their relation to Killing vectors and investigate the
implications on the curvature of the spacetime. We will spend some
time defining 1-forms and potentials which are useful in the
characterisations later on. In section \ref{boundary}, we define the
required asymptotic conditions needed for the characterisation
theorems. Then, in section \ref{Section:CharacteristionKerr}, we show
that the conditions of the characterisation result of Mars
\cite{Mar00} are satisfied when the spacetime admits an appropriate
Killing spinor. Finally, section \ref{Section:WongCharacterisation09}
shows the same for Wong's characterisation of the Kerr-Newman spacetime
---i.e. the existence of an appropriate Killing spinor on a
electrovacuum spacetime guarantees that the solution is Kerr-Newman up
to an isometry.

\subsection*{Conventions}
In what follows, $(\mathcal{M},\bmg)$ will denote an electrovacuum
spacetime satisfying the Einstein equations with vanishing
cosmological constant.  The signature of the metric in this article
will be $(+,-,-,-)$, to be consistent with most of the existing
literature using spinors. We use the spinorial conventions of
\cite{PenRin84}. The lowercase Latin letters $a,\, b,\, c, \ldots$ are
used as abstract spacetime tensor indices while the uppercase letters
$A,\,B,\,C,\ldots$ will serve as abstract spinor indices. The Greek
letters $\mu, \, \nu, \, \lambda,\ldots$ will be used as spacetime
coordinate indices while $\alpha,\,\beta,\,\gamma,\ldots$ will serve
as spatial coordinate indices.

Our conventions for the curvature are that
\[
\nabla_c \nabla_d u^b -\nabla_d \nabla_c u^b = R_{dca}{}^b u^a.
\]
The curvature spinors $\Psi_{ABCD}$ and $\Phi_{ABA'B'}$ are defined by
the relations
\[
\square_{AB} \xi_C = \Psi_{ABCD} \xi^D-2 \Lambda \xi_{(A}\epsilon_{B)C}, \qquad \square_{A'B'} \xi_C= \xi^D \Phi_{CDA'B'},
\]
where $\square_{AB} \equiv \nabla_{A'(A} \nabla_{B)}{}^{A'}$. Given an
antisymmetric rank 2 tensor $F_{ab}$, its Hodge dual is defined by
\[
F^\star_{ab} \equiv \frac{1}{2}\epsilon_{ab}{}^{cd}F_{cd}.
\]
The self-dual version of $F_{ab}$ is then defined by
\[
\mathcal{F}_{ab} \equiv F_{ab} + \mbox{i} F^\star{}_{ab}.
\]

\section{Killing spinors}
\label{KS}
The purpose of this section is to provide a summary of the basic
theory of Killing spinors in electrovacuum spacetimes ---see
\cite{HugSom73a,HugSom73b,HugPenSomWal72}. Throughout we assume that
$(\mathcal{M},\bmg)$ denotes an electrovacuum spacetime. In spinorial
notation the Einstein-Maxwell equations read
\[
\Phi_{ABA'B'} = 2 \phi_{AB} \bar{\phi}_{A'B'}, \qquad \Lambda =0
\]
where $\phi_{AB}=\phi_{(AB)}$ is the Maxwell spinor satisfying
\begin{equation}
\nabla^A{}_{A'}\phi_{AB}=0.
\label{MaxwellEquation}
\end{equation}
The Bianchi identity in electrovacuum spacetimes takes the form
\begin{equation}
\nabla^{A}_{\phantom{A}A'}\Psi_{ABCD}^{\phantom{A}}=2\bar{\phi}_{A'B'}\nabla_{B}^{\phantom{B}B'}\phi_{CD}^{\phantom{A}}. \label{BianchiEV}
\end{equation}

\subsection{Basic equations}
\label{Section:KillingSpinorBasicExpressions}
A \emph{Killing spinor} is a valence-2 symmetric spinor $\kappa_{AB}$
satisfying the equation
\begin{equation}
\nabla_{A'(A}\kappa_{BC)}=0.
\label{KillingSpinorEquation}
\end{equation}
By taking a further contracted derivative of this equation, it can be
shown that a solution to equation \eqref{KillingSpinorEquation} must also satisfy the integrability condition
\begin{equation}
\kappa_{(A}^{\phantom{mi}F}\Psi_{BCD)F}^{\phantom{a}}=0 \label{intcond}
\end{equation}
where $\Psi_{ABCD}$ is the Weyl spinor, a completely symmetric spinor
which is the spinorial equivalent of the Weyl tensor. This condition
restricts the form of the Weyl spinor as it requires that
\[
\Psi_{ABCD}\propto\kappa_{(AB}\kappa_{CD)}.
\]
This proportionality condition forces the spacetime to be of Petrov
type D, N or O (i.e. conformally flat). In particular, if a
non-vanishing Killing spinor has a repeated principal spinor
$\alpha_A$ so that $\kappa_{AB} = \alpha_{(A}\alpha_{B)}$, then the
Weyl spinor has four repeated null directions, and so it is of Petrov type
N.  If the Killing
spinor is algebraically general, i.e. there exist $\alpha_A$ and $\beta_B$ such that $\kappa_{AB}=\alpha_{(A}\beta_{B)}$, then the Weyl spinor has two pairs of repeated null directions, and so it is of Petrov type D.

\subsubsection*{Algebraically general Killing spinors}
 In the case that the Killing spinor $\kappa_{AB}$ is
algebraically general, we can use the principal spinors $\alpha_A$ and
$\beta_B$ to form a normalised spin dyad which we will denote by
$\{o^A, \,\iota^B\}$ and such that $o_A \iota^A=1$. The Killing spinor
$\kappa_{AB}$ is then expanded in terms of the basis as
\begin{equation}
\kappa_{AB}=\varkappa o_{(A}\iota_{B)} \label{KappaBasis}
\end{equation}
for some factor of proportionality $\varkappa$. Due to equation
\eqref{intcond}, the Weyl spinor can be expanded in a similar way as
\begin{equation}
\Psi_{ABCD}=\psi o_{(A}o_{B}\iota_{C}\iota_{D)} \label{Psibasis}
\end{equation}
for some factor of proportionality $\psi$.

The substitution of
expression \eqref{KappaBasis} in the Killing spinor equation
\eqref{KillingSpinorEquation} implies restrictions on the
Newman-Penrose (NP) spin connection coefficients. Namely, one has that
\[
\kappa = \lambda = \nu = \sigma =0,
\]
consistent with the fact that the spacetime is, at least, of Petrov
type D.

\subsection{The Killing vector associated to a Killing spinor}
\label{Section:TheKillingVector}
A Killing spinor $\kappa_{AB}$ can be used to define the spinorial
counterpart $\xi_{AA'}$ of a (possibly complex) vector via the relation
\begin{equation}
\xi_{AA'}\equiv \nabla^{C}{}_{A'}\kappa_{AC} \label{defKV}.
\end{equation}
It can be shown, using the Killing spinor equation
\eqref{KillingSpinorEquation} and commuting covariant derivatives, that
$\xi_{AA'}$ satisfies the equation
\[
\nabla_{AA'}\xi_{BB'}+\nabla_{BB'}\xi_{AA'}=-6\kappa_{(A}^{\phantom{iA}C}\Phi_{B)CA'B'}^{\phantom{A}}.
\]
Therefore, if 
\begin{equation}
\kappa_{(A}^{\phantom{iA}C}\Phi_{B)CA'B'}^{\phantom{A}}=0
\label{MatterAlignmentCondition}
\end{equation}
then $\xi_{AA'}$ is the spinorial counterpart of  a (possibly complex)
Killing vector in the spacetime. In what follows, we call condition
\eqref{MatterAlignmentCondition} the \emph{matter alignment
  condition}. In the particular case of an electrovacuum spacetime the
matter alignment condition takes the form
\begin{equation}
\kappa_{(A}{}^C \phi_{B)C}=0
\label{MaxwellAlignmentCondition}
\end{equation}
implying that the spinors $\kappa_{AB}$ and $\phi_{AB}$ are
proportional to each other. Thus, in terms of the basis dyad
$\{o,\,\iota\}$ used to express equation \eqref{KappaBasis} one can
  write
\begin{equation}
\phi_{AB} = \varphi o_{(A}\iota_{B)}
\label{PhiBasis}
\end{equation}
with $\varphi$ a proportionality constant.

\medskip
As discussed in \cite{PenRin86}, the notion of a
Lie derivative is, in general, not well defined for spinors. However,
in the case of a Hermitian spinor $\xi^{AA'}$ associated to a real
Killing vector, there exists a consistent expression which can be used
to obtain the spinorial counterpart of the condition $\mathcal{L}_\xi
F_{ab}=0$, stating that the Killing vector $\xi^a$ is also a symmetry
of the Faraday tensor ---namely:
\begin{equation}
\mathcal{L}_\xi \phi_{AB} \equiv \xi^{CC'} \nabla_{CC'} \phi_{AB} +
\phi_{C(A} \nabla_{B)C'} \xi^{CC'}.
\label{LieDMaxwell}
\end{equation}
The Maxwell spinor will be said to \emph{inherit the symmetry generated by
the Killing vector} $\xi^a$ if $\mathcal{L}_\xi \phi_{AB}=0$ ---recall
that the Maxwell spinor $\phi_{AB}$ is related to the Faraday tensor
via the relation
\[
\mathcal{F}_{AA'BB'} = 2 \phi_{AB}\epsilon_{A'B'} 
\]
where $\mathcal{F}_{AA'BB'}$ denotes the spinorial counterpart of the
self-dual Faraday tensor $\mathcal{F}_{ab}= F_{ab}+ \mbox{i}
F^\star_{ab}$. 

\medskip
\noindent
\textbf{Remark.} In Section \ref{Section:ErnstFormExpansions} it will be shown that in an
electrovacuum spacetime $(\mathcal{M},\bmg,\bmF)$ endowed with a
Killing spinor $\kappa_{AB}$ such that $\xi_{AA'}$ is Hermitian and 
$\phi_{AB}$ and $\kappa_{AB}$ satisfy the alignment condition
\eqref{MaxwellAlignmentCondition} then $\phi_{AB}$
inherits the symmetry of the spacetime. 

\subsection{Relation to Killing-Yano tensors}
If a spacetime $(\mathcal{M},\bmg)$ admits a Killing spinor
$\kappa_{AB}$, and the vector $\xi^{AA'}$ defined by \eqref{defKV} satisfies $\xi^{AA'}=\bar{\xi}^{AA'}$ (i.e. is a real vector), then one can construct a real, valence-2 antisymmetric
tensor $Y_{ab}$ as the tensorial counterpart of the spinorial relation
\begin{equation*}
 Y_{AA'BB'}\equiv \mbox{i}\big(\kappa_{AB}\epsilon_{A'B'}- \bar{\kappa}_{A'B'}\epsilon_{AB}\big)
\end{equation*}
which, as a consequence of \eqref{KillingSpinorEquation}, satisfies the Killing-Yano equation
\begin{equation*}
\nabla_{(a}Y_{b)c}=0.
\end{equation*}
Such a tensor is called a \emph{Killing-Yano tensor}. Conversely, if a
spacetime admits a Killing-Yano tensor $Y_{ab}$, one can construct a
valence-2 symmetric spinor $\kappa_{AB}$ from the relation
\begin{equation*}
\kappa_{AB}\equiv-\frac{\mbox{i}}{4}\epsilon^{A'B'}\left(Y_{AA'BB'}+\mbox{i}Y^{*}_{AA'BB'}\right)
\end{equation*}
which satisfies the Killing spinor equation
\eqref{KillingSpinorEquation} ---see e.g. \cite{PenRin86}, Section 6.7
page 107; also \cite{McLvdB93}.

\medskip
\noindent
\textbf{Remark.} The existence of a Killing-Yano tensor for the
Kerr-Newman spacetime is a key ingredient to show the integrability of
the Hamilton-Jacobi equations for geodesic motion, and the
separability of the Maxwell equations and the Dirac equation on the
Kerr-Newman spacetime ---see e.g. \cite{Kam88} or \cite{SKMHH} for further details.

\subsection{The Killing form}
\label{Section:KillingForm}
In the reminder of this section assume that the matter alignment condition
\eqref{MatterAlignmentCondition} is satisfied, so that $\xi_{AA'}$ is
the spinorial counterpart of a Killing vector. Moreover, \emph{assume
  that $\xi_{AA'}$ is a Hermitian spinor so that, in fact, it is the
  spinorial counterpart of a real vector.} Then, define the
spinorial counterpart of the \emph{Killing form} of $\xi^a$, namely
\[
H_{ab}\equiv  \nabla_{[a} \xi_{b]} =\nabla_a \xi_b
\]
by
\[
H_{AA'BB'}\equiv \nabla_{AA'}\xi_{BB'}.
\]
As a consequence of the antisymmetry in the pairs ${}_{AA'}$ and
${}_{BB'}$ it can be decomposed into irreducible parts as 
\begin{equation}
H_{AA'BB'}=\eta_{AB}\epsilon_{A'B'}+\bar{\eta}_{A'B'}\epsilon_{AB} \label{KillingForm}
\end{equation}
where $\eta_{AB}$ is a symmetric spinor ---the \emph{Killing form
  spinor}. In the sequel, we will require the self-dual part of $H_{AA'BB'}$
, denoted by $\mathcal{H}_{AA'BB'}$, and defined by
\[
\mathcal{H}_{AA'BB'} \equiv H_{AA'BB'}+\mbox{i} H_{AA'BB'}^{\star}.
\]
A direct calculation then yields 
\[
\mathcal{H}_{AA'BB'}=2\eta_{AB}\epsilon_{A'B'}.
\]
Using equation \eqref{KillingForm}, the spinor $\eta_{AB}$ can be
expressed in terms of the Killing vector as
\begin{equation}
\eta_{AB}=\frac{1}{2}\nabla_{AA'}\xi_{B}{}^{A'}.
\label{DefinitionEta}
\end{equation}
Then, by using \eqref{defKV}, this can be expanded in terms of the
Killing spinor so that 
\begin{equation}
\eta_{AB}=-\frac{3}{4}\Psi_{ABCD}\kappa^{CD}. \label{eta2kappa}
\end{equation}

\subsubsection*{Expansions for the algebraically general case}
Assuming, now that $\kappa_{AB}$ is algebraically general, using the
basis expansions of both $\kappa_{AB}$ and $\Psi_{ABCD}$, we can find
the basis expansion of $\eta_{AB}$:
\begin{equation}
\eta_{AB}=\frac{1}{4}\varkappa \psi o_{(A}\iota_{B)} = \eta
o_{(A}\iota_{B)}
\label{etabasis} 
\end{equation}
where 
\begin{equation}
\eta \equiv \frac{1}{4}\varkappa \psi.
\label{EtaScalarExplicit}
\end{equation}

\subsection{The Ernst forms and potentials}
\label{Section:ErnstPotentialsTheory}

Throughout this section let $\xi^a$ denote a real Killing vector on
the electrovacuum spacetime $(\mathcal{M},\bmg)$. A well-known
consequence of the Killing equation
\[
\nabla_a\xi_b + \nabla_b \xi_a =0
\]
and the definition of the Riemann tensor in terms of commutators of
covariant derivatives is that
\begin{equation}
\nabla_a \nabla_b \xi_c = R_{cba}{}^d \xi_d.
\label{IdentityKillingVectorRiemannBasic}
\end{equation}

\medskip
The \emph{Ernst form} of the Killing vector $\xi^a$ is defined as
\begin{equation}
\chi_a = 2 \xi^b \mathcal{H}_{ba}
\label{DefinitionErnstForm}
\end{equation}
Several properties of the Ernst form follow from the identity
\eqref{IdentityKillingVectorRiemannBasic} recast as 
\begin{equation}
\nabla_a \mathcal{H}_{bc} = \mathcal{R}_{cba}{}^d \xi_d
\label{IndentityKillingVectorRiemann}
\end{equation}
where $\mathcal{R}_{abcd}$ denotes the \emph{self-dual} Riemann
tensor. From expression \eqref{IndentityKillingVectorRiemann} it
follows, using the identity
\[ 
{}^*R_{[abc]d} = \frac{1}{3} \epsilon_{abce}R^e{}_d,
\]
that
\[
\nabla_{[a} \mathcal{H}_{bc]} =\frac{1}{3}\epsilon_{cbae} R^e{}_d \xi^d, \qquad \nabla^a \mathcal{H}_{ab} = -R_{ba}\xi^a.
\]
A further computation using the above identities and the definition of
the Ernst form, equation \eqref{DefinitionErnstForm}, yields
\begin{equation}
\nabla_a \chi_b - \nabla_b \chi_a = -2 \epsilon_{cbae} \xi^c R^e{}_d \xi^d.
\label{ExteriorDerivativeErnstForm}
\end{equation}


\subsubsection{The vacuum case}
In \emph{vacuum} $\mathcal{R}_{abcd}=\mathcal{C}_{abcd}$, where
$\mathcal{C}_{abcd}$ denotes the self-dual Weyl tensor, and so from the symmetries of the Weyl tensor one concludes that
\[
\nabla_a \chi_b - \nabla_b \chi_a =0.
\]
Consequently, in vacuum the Ernst form is closed and thus, locally exact so that there exists a
scalar, the \emph{Ernst potential} $\chi$, such that
\[
\chi_a = \nabla_a \chi.
\]

\medskip
Let now $\xi_{AA'}$ denote the (Hermitian) spinorial counterpart of
the real Killing vector $\xi^a$. If $\xi_{AA'}$ arises from a Killing
spinor through relation \eqref{defKV}, it follows from the spinor
decomposition of $\mathcal{H}_{AA'BB'}$ that the spinorial counterpart
$\chi_{AA'}$ of the Ernst form $\chi_a$ is given by
\begin{align*}
\chi_{AA'} &=4\eta_{AB}\xi^{B}_{\phantom{B}A'} \\
&=3\kappa^{CF}\Psi_{ABCF}\nabla_{DA'}\kappa^{DB}.
\end{align*}


\subsubsection{The electrovacuum case}
In the electrovacuum case the Ernst form is no longer exact
---cf. equation \eqref{ExteriorDerivativeErnstForm}. However,
if the Faraday tensor inherits the symmetry of the spacetime
---i.e. $\mathcal{L}_\xi F_{ab}=0$--- then it is possible to construct
a further 1-form, the so-called \emph{electromagnetic Ernst form},
which can be shown to be closed. In analogy to the definition in
\eqref{DefinitionErnstForm} one sets
\begin{equation}
\varsigma_a \equiv 2 \xi^b \mathcal{F}_{ba}.
\label{DefinitionEMErnstForm}
\end{equation}
A computation then shows that 
\[
\nabla_a \varsigma_b - \nabla_b \varsigma_a = 2 \mathcal{L}_\xi \mathcal{F}_{ab}.
\]
Thus, as claimed, if $\mathcal{L}_\xi F_{ab}=0$ then $\varsigma_a$ is
closed and, accordingly locally exact so that there exists a scalar,
the \emph{electromagnetic Ernst potential} $\varsigma$ such that
\[
\varsigma_a = \nabla_a \varsigma.
\]
The spinorial version of equation \eqref{DefinitionEMErnstForm} can be
readily be found to be
\[
\sigma_{AA'} = 4 \phi_{AB} \nabla^Q{}_{A'} \kappa^B{}_Q.
\]

\subsubsection{Expansions in the algebraically general case}
\label{Section:ErnstFormExpansions}
Consider now the case of an algebraically general spinor $\kappa_{AB}$
such that $\xi_{AA'}$ as given by equation \eqref{defKV} is
Hermitian. In order to find the full basis expansions of $\chi_{AA'}$
and $\varsigma_{AA'}$ 
we need to calculate the derivative of the proportionality factor
$\varkappa$. First, note the expressions for the derivatives of the
spin basis vectors in terms of the spin coefficients of the
Newman-Penrose formalism:
\begin{subequations}
\begin{align}
\nabla_{AA'}o_{B}=&-\alpha o_{A}o_{B}\bari_{A'}-\beta \iota_{A}o_{B}\baro_{A'}+\gamma o_{A}o_{B}\baro_{A'}+\epsilon \iota_{A}o_{B}\bari_{A'} \notag \\
&-\kappa\iota_{A}\iota_{B}\bari_{A'}+\rho o_{A}\iota_{B}\bari_{A'}+\sigma\iota_{A}\iota_{B}\baro_{A'}-\tau o_{A}\iota_{B}\baro_{A'}, \label{dero}\\
\nabla_{AA'}\iota_{B}=&\;\alpha o_{A}\iota_{B}\bari_{A'}+\beta\iota_{A}\iota_{B}\baro_{A'}-\gamma o_{A}\iota_{B}\baro_{A'}-\epsilon\iota_{A}\iota_{B}\bari_{A'} \notag \\
&-\lambda o_{A}o_{B}\bari_{A'}-\mu\iota_{A}o_{B}\baro_{A'}+\nu o_{A}o_{B}\baro_{A'}+\pi\iota_{A}o_{B}\bari_{A'}. \label{deri}
\end{align}
\end{subequations}
Substituting the basis expansion for the Killing spinor into the
Killing spinor equation, using expressions \eqref{dero}-\eqref{deri}
and the relation $\epsilon_{AB}=o_{A}\iota_{B}-\iota_{A}o_{B}$, we
find that
\begin{equation}
\nabla_{AA'}\varkappa=\varkappa\left(\mu o_{A}\baro_{A'}-\pi o_{A}\bari_{A'}+\tau \iota_{A}\baro_{A'}-\rho\iota_{A}\bari_{A'}\right).  \label{derK}
\end{equation}

 The expressions obtained in the previous paragraphs allow one to obtain an
expression of the Killing spinor in terms of the spin basis. A
calculation starting from the definition \eqref{defKV} readily yields
the expression
\[
\xi_{AA'} = -\frac{3}{2} \varkappa \big(\mu o_A \bar{o}_{A'} -\pi o_A
\bar{\iota}_{A'} -\tau \iota_A \bar{o}_{A'} +\rho \iota_A \bar{\iota}_{A'}\big).
\]
If $\xi_{AA'}$ is a Hermitian spinor,
i.e. $\xi_{AA'}=\bar{\xi}_{AA'}$, then the previous expression implies
\begin{equation}
\label{Hermiticy}
\bar{\mu}\bar{\varkappa} = \mu \varkappa, \qquad \bar{\tau}
\bar{\varkappa}=\varkappa\pi, \qquad \bar{\rho}\bar{\varkappa} =
\varkappa \rho.
\end{equation}

\medskip
\noindent
\textbf{The vacuum case.} Using the previous expression along with the basis expansions for
$\kappa_{AB}$ and $\Psi_{ABCD}$, \emph{in vacuum}, the Ernst form can be expanded as
\begin{equation}
\chi_{AA'}=\frac{3}{4}\varkappa ^2\psi \left(\mu o_{A}\baro_{A'}-\pi o_{A}\bari_{A'}+\tau \iota_{A}\baro_{A'}-\rho \iota_{A}\bari_{A'}\right).\label{Ernstbasis}
\end{equation}

Intuitively, one would expect it should be possible to express the
Ernst form $\chi$ in terms of the scalars $\varkappa$ and $\psi$. As
it will be seen in Section \ref{Section:CharacteristionKerr}, the
characterisation of the Kerr spacetime given by Theorem
\ref{TheoremMars00} suggests that a combination of the form
$\mathfrak{c} + \frac{3}{4}\varkappa^2 \psi$ with $\mathfrak{c}$ a constant is a suitable
candidate. In order to compute the derivative of this expression one
needs an expression for $\nabla_{AA'} \psi$. The latter can be
obtained from the vacuum Bianchi identity
\[
\nabla^{A}_{\phantom{A}A'}\Psi_{ABCD}^{\phantom{A}}=0.
\]
Substituting the basis expansion for the Weyl spinor into the above relation,
using equations \eqref{dero} and \eqref{deri}, collecting terms and
finally making use of $\epsilon_{AB}=o_{A}\iota_{B}-\iota_{A}o_{B}$
one obtains
\begin{equation}
\nabla_{AA'}\psi=-3\psi\left(\mu o_{A}\baro_{A'}-\pi
  o_{A}\bari_{A'}+\tau\iota_{A}\baro_{A'}-\rho\iota_{A}\bari_{A'}\right).  
\label{NablaPsi}
\end{equation}
Combining the latter with expression \eqref{derK} for $\nabla_{AA'}
\varkappa$ one finds that
\[
\nabla_{AA'} \bigg( \mathfrak{c} - \frac{3}{4}\varkappa^2 \psi \bigg) =\chi_{AA'}
\]
so that one can indeed set
\[
\chi = \mathfrak{c} - \frac{3}{4}\varkappa^2 \psi \qquad\mbox{for
  some} \quad \mathfrak{c}\in \mathbb{C}. 
\]
This expression can be simplified using the following observation:
combining expressions for $\nabla_{AA'}\psi$ and
$\nabla_{AA'}\varkappa$ given by equations \eqref{NablaPsi} and
\eqref{derK}, respectively, one finds that
\[
\nabla_{AA'}\left(\varkappa^3\psi\right)=0; 
\]
accordingly, the combination $\varkappa^3\psi$ is a
constant. Therefore, one has that 
\begin{equation}
\varkappa^3\psi =\mathfrak{M}
\label{VarkappaPsi:Vacuum}
\end{equation}
with $\mathfrak{M}$ a (possibly complex) constant and one has 
\begin{equation}
\chi = \mathfrak{c} - \frac{3\mathfrak{M}}{4\varkappa}.
\label{ErnstFormKillingSpinorTheory}
\end{equation}

\medskip
\noindent
\textbf{The electrovacuum case.} From the electrovacuum Bianchi identity in the
form \eqref{BianchiEV} a calculation yields
\begin{eqnarray*}
&& \nabla_{AA'}\psi=-3\left(\psi+2\varphi\bar{\varphi}\right)\mu o_{A}\baro_{A'}+3\left(\psi-2\varphi\bar{\varphi}\right)\pi o_{A}\bari_{A'} \\
&& \hspace{3cm} -3\left(\psi-2\varphi\bar{\varphi}\right)\tau\iota_{A}\baro_{A'}+3\left(\psi+2\varphi\bar{\varphi}\right)\rho
\iota_{A}\bari_{A'}. 
\end{eqnarray*}
Similarly, from the Maxwell equations \eqref{MaxwellEquation} and the
derivatives of the basis vectors given by equations \eqref{dero} and
\eqref{deri} one finds that 
\begin{equation}
\nabla_{AA'}\varphi=-2\varphi\left(\mu o_{A}\baro_{A'}-\pi o_{A}\bari_{A'}+\tau \iota_{A}\baro_{A'}-\rho \iota_{A}\bari_{A'}\right). \label{derphiEV}
\end{equation}
Thus, a further calculation using the previous expressions yields the
following explicit expression for the electromagnetic Ernst potential:
\[
\varsigma_{AA'} =3\varkappa \varphi\big( \mu o_A \bar{o}_{A'} - \pi
o_A \bar{\iota}_{A'} + \tau \iota_A \bar{o}_{A'}-\rho\iota_{A}\bari_{A'} \big). 
\]

 In the electrovacuum case, assuming an algebraically general Killing
spinor and that the Maxwell spinor and the Killing spinor satisfy the
matter alignment condition \eqref{MaxwellAlignmentCondition}, the
characterisation of the Kerr-Newman spacetime given in Theorem
\ref{Theorem:Wong09} suggests an expression for $\varsigma$ in
terms of the scalars $\varkappa$, $\psi$ and $\varphi$ ---namely
$\mathfrak{c}'-\bar{\varkappa} \bar{\psi}/2\bar{\varphi}$ with
$\mathfrak{c}'$ a constant. Combining
the above expressions one concludes that
\begin{equation}
\nabla_{AA'}\bigg(\mathfrak{c}'-\frac{\bar{\varkappa}
  \bar{\psi}}{2\bar{\varphi}} \bigg) = \varsigma_{AA'}
\label{EMErnstPotentialCheck}
\end{equation}
so that one can set
\begin{equation}
\varsigma = \mathfrak{c}' -\frac{1}{2}\frac{\bar{\varkappa} \bar{\psi}}{\bar{\varphi}} \qquad
\mbox{for some}\quad \mathfrak{c}'\in \mathbb{C}.
\label{ElectrovacuumErnstPotential}
\end{equation}
Moreover, combining expression \eqref{derK} for $\nabla_{AA'}\varkappa$
with \eqref{derphiEV} one concludes that 
\[
\nabla_{AA'}\left(\varkappa^2\varphi\right)=0
\]
meaning the combination $\varkappa^2\varphi$ is constant. Thus, there
exists a (possibly complex) constant $\mathfrak{Q}$ such that
\begin{equation}
\varkappa^2\varphi = \mathfrak{Q}. 
\label{ElectrovacuumIntegrationConstant}
\end{equation}

\medskip
In the electrovacuum case the relation between the scalars
$\varkappa$ and $\psi$ takes a more complicated form than in vacuum
---cf. equation \eqref{VarkappaPsi:Vacuum}. Given a complex constant
$\mathfrak{C}'$, a calculation using
expressions \eqref{derK}, \eqref{derphiEV} and relation
\eqref{ElectrovacuumIntegrationConstant} shows that
\begin{eqnarray*}
&& \nabla_{AA'}\left( \frac{\mathfrak{C}}{\bar{\varkappa}}+ \varkappa^3 \psi  \right) =
-\left(\frac{6|\mathfrak{Q}|^2\varkappa\mu}{\bar{\varkappa}^2} +\frac{\mathfrak{C}\bar{\mu}}{\bar{\varkappa}}\right)
o_A \bar{o}_{A'}
   -\left(\frac{6|\mathfrak{Q}|^2\varkappa\pi}{\bar{\varkappa}^2} +\frac{\mathfrak{C}\bar{\tau}}{\bar{\varkappa}}\right)
o_A \bar{\iota}_{A'} \\
&& \hspace{5cm} + \left(\frac{6|\mathfrak{Q}|^2\varkappa\tau}{\bar{\varkappa}^2} +\frac{\mathfrak{C}\bar{\pi}}{\bar{\varkappa}}\right)
\iota_A \bar{o}_{A'} + \left(\frac{6|\mathfrak{Q}|^2\varkappa\rho}{\bar{\varkappa}^2} +\frac{\mathfrak{C}\bar{\rho}}{\bar{\varkappa}}\right)
\iota_A \bar{\iota}_{A'}.
\end{eqnarray*}
If the spinor $\xi_{AA'}$ is assumed to be Hermitian, then the
previous expression reduces to
\[
\nabla_{AA'}\left( \frac{\mathfrak{C}}{\bar{\varkappa}}+ \varkappa^3
  \psi  \right) = -\frac{\varkappa (\mathfrak{C}+6
|\mathfrak{Q}|^2)}{\bar{\varkappa}^2}\big( \mu o_A
\bar{o}_{A'} + \pi o_A \bar{\iota}_{A'} - \tau \iota_A \bar{o}_{A'}  -
\rho \iota_A \bar{\iota}_{A'}\big). 
\]
Thus, if one chooses $\mathfrak{C} =- 6 |\mathfrak{Q}|^2$, then the combination
$\mathfrak{C}/\bar{\varkappa} + \varkappa^3 \psi$ is a constant
---that is, there exists $\mathfrak{M}'\in \mathbb{C}$ such that
\begin{equation}
 \varkappa^3
  \psi -\frac{6|\mathfrak{Q}|^2}{\bar{\varkappa}} = \mathfrak{M}'.
\label{MprimeDefinition}
\end{equation}
Thus, the scalar $\psi$ can be expressed solely in terms of
$\varkappa$ as
\begin{equation}
\psi = \frac{1}{\varkappa^3} \left( \mathfrak{M}' + \frac{6|\mathfrak{Q}|^2}{\bar{\varkappa}} \right).\label{massEV}
\end{equation}
Note that when the Maxwell field vanishes, then the constant $\mathfrak{Q}$ also vanishes and this equation reduces to the vacuum case given by \eqref{VarkappaPsi:Vacuum}.

\medskip
Finally, it is observed that expanding expression \eqref{LieDMaxwell} in terms of
the spinor basis $\{ o,\,\iota \}$ and using expressions
\eqref{etabasis} and \eqref{derphiEV} one concludes, after a calculation, that 
\[
\mathcal{L}_{\xi} \varphi_{AB}= 0
\]
---so that $\varphi_{AB}$ inherits the symmetry generated by the
Killing spinor $\kappa_{AB}$.

\subsection{Spacetimes with an algebraically special Killing spinor}

In this section we briefly consider electrovacuum spacetimes with an
algebraically special Killing spinor. These spacetimes will not play a
role in the remainder of this article. The reason for this is the
following result:

\begin{lemma}
Let $(\mathcal{M},\bmg)$ be a smooth electrovacuum spacetime with a
matter content satisfying the matter alignment condition and
admitting a valence-2 Killing spinor $\kappa_{AB}$ such that the
associated field $\xi^{AA'}$ is a Hermitian spinor. If $\kappa_{AB}$
is algebraically special (i.e. $\kappa_{AB}= \alpha_{A}\alpha_{B}$ for
some non-vanishing spinor $\alpha_A$) then $\xi^a=0$. 
\end{lemma}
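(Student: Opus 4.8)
\medskip
\noindent\textbf{Proof plan.} The plan is to combine the algebraic information carried by the Killing spinor equation with the fact that the Killing form of $\xi^a$ must vanish because $\kappa_{AB}$ is null. As $\kappa_{AB}=\alpha_A\alpha_B$ with $\alpha_A$ non-vanishing, I would first choose --- locally, which is enough since $\xi^a=0$ is a local statement --- a spin dyad $\{o_A,\iota^A\}$ with $o_A\iota^A=1$ and $o_A=\alpha_A$, so that $\kappa_{AB}=o_Ao_B$, $\kappa^{AB}=o^Ao^B$ and $o^Ao_A=0$; by \eqref{intcond} this also gives $\Psi_{ABCD}\propto\kappa_{(AB}\kappa_{CD)}=o_Ao_Bo_Co_D$ (Petrov type N or O).

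First I would read off the form of $\xi_{AA'}$. Since \eqref{KillingSpinorEquation} annihilates the totally symmetric part of $\nabla_{AA'}\kappa_{BC}$, one has the standard identity writing $\nabla_{AA'}\kappa_{BC}$ as a non-zero multiple of $\epsilon_{A(B}\xi_{C)A'}$, with $\xi_{AA'}$ as in \eqref{defKV}. Contracting this with $\kappa^{BC}$ and using $\kappa_{BC}\kappa^{BC}=0$ forces $o^C\xi_{CA'}=0$, hence $\xi_{AA'}=o_A\mu_{A'}$ for some $\mu_{A'}$; in particular $\xi^a$ is null or zero. Imposing that $\xi_{AA'}$ be Hermitian then forces $\mu_{A'}\propto\bar o_{A'}$ with a \emph{real} factor, i.e. $\xi_{AA'}=f\,o_A\bar o_{A'}$ with $f$ real. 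Contracting the same identity instead with $o^B\iota^C$, expanding the left-hand side with \eqref{dero} and $o^B\iota_B=-1$, and substituting $\xi_{AA'}=f o_A\bar o_{A'}$, yields the key relation $\tau=c_0\,f$ for a fixed constant $c_0\neq0$, where $\tau$ is the corresponding Newman--Penrose spin coefficient (the same contraction also gives $\kappa=\sigma=\rho=0$, which we will not need).

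Next I would use the vanishing of the Killing form. Under the matter alignment condition \eqref{MatterAlignmentCondition} the field $\xi^a$ is a Killing vector, so $\nabla_a\xi_b=H_{ab}$; and \eqref{eta2kappa} together with $\Psi_{ABCD}\propto o_Ao_Bo_Co_D$ gives $\eta_{AB}\propto o_Ao_B(o_Co^C)^2=0$, so by \eqref{KillingForm} $\nabla_{AA'}\xi_{BB'}=0$. To conclude, I would contract this with $o^B$: using $o^Bo_B=0$ and \eqref{dero}, the left-hand side collapses to $f\,\bar o_{B'}\big(\kappa\iota_A\bar\iota_{A'}-\rho o_A\bar\iota_{A'}-\sigma\iota_A\bar o_{A'}+\tau o_A\bar o_{A'}\big)$, which in particular forces $f\tau=0$. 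Combined with $\tau=c_0 f$ this gives $c_0 f^2=0$, hence $f=0$, so $\xi_{AA'}=0$, i.e. $\xi^a=0$.

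I expect the main obstacle to be the spin-dyad bookkeeping behind the relation $\tau=c_0 f$ in the second step --- in particular confirming that $c_0\neq0$, since this is precisely what upgrades $f\tau=0$ to $f=0$. The remaining pieces are short: the nullity of $\xi^a$ and $\eta_{AB}=0$ are one-line spinor identities, and the closing relation is an immediate consequence of $\nabla_a\xi_b=0$. One should also dispose at the outset of the trivial case $\mu_{A'}=0$ (then $\xi^a=0$ at once) and note that $\alpha_A$ cannot vanish precisely because $\kappa_{AB}$ does not.
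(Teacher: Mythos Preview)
Your proposal is correct and follows essentially the same strategy as the paper: choose a dyad adapted to $\alpha_A$, observe that the type-N condition forces $\eta_{AB}=0$ via \eqref{eta2kappa} and hence $\nabla_a\xi_b=0$, and then combine this with the Newman--Penrose constraints coming from the Killing spinor equation to kill the remaining components of $\xi_{AA'}$.

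The one organisational difference worth noting is where Hermiticity is used. The paper keeps the general expression $\xi_{AA'}=-3\beta\,o_A\bar o_{A'}+3\epsilon\,o_A\bar\iota_{A'}$, extracts from $\nabla_a\xi_b=0$ the two conditions $\beta\tau=0$ and $\epsilon\rho=0$ (via $\Delta\xi=0$ and $\bar\delta\xi=0$), and then closes using $\tau+\beta=0$, $\rho+\epsilon=0$. You instead invoke Hermiticity at the outset to force $\mu_{A'}\propto\bar o_{A'}$ (equivalently $\epsilon=0$, $\rho=0$), reducing everything to the single real scalar $f$; your contraction of the Killing spinor identity with $o^B\iota^C$ then gives $\tau=\tfrac{1}{3}f$ (so indeed $c_0=\tfrac13\neq0$), and contracting $\nabla_{AA'}\xi_{BB'}=0$ with $o^B$ yields $f\tau=0$. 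This is a slightly tidier endgame than the paper's, but the underlying mechanism is the same.
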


\begin{proof}
It follows directly from the existence of a non-vanishing
algebraically special Killing spinor that the spacetime
$(\mathcal{M},\bmg)$ must be of Petrov type N ---see equation
\eqref{Psibasis}. That is, one has that
\begin{equation}
\Psi_{ABCD}=\psi \alpha_{A}\alpha_{B}\alpha_{C}\alpha_{D}
\label{Psi:PetrovN}
\end{equation}
for some function $\psi$. As the matter alignment condition holds by
assumption, the Hermitian spinor $\xi_{AA'}$ is the spinorial
counterpart of a real Killing vector $\xi^a$. The content of the
Killing form of $\xi^a$ is encoded in the symmetric spinor
$\eta_{AB}$. Substituting the expansions \eqref{Psi:PetrovN} and
$\kappa_{AB}=\alpha_A\alpha_B$ into equation \eqref{eta2kappa}, it
follows directly that $\eta_{AB}=0$. Thus, the Killing form $H_{ab}$
of $\xi^a$ vanishes. Accordingly, $\xi^a$ is a covariantly constant
vector on $(\mathcal{M},\bmg)$ ---i.e. one has
\begin{equation}
\nabla_{a}\xi^{b}=0.
\label{CovariantlyConstantKillingVector}
\end{equation}

In order to  further investigate the consequences of this observation we introduce a
normalised spin dyad $\{ o^A,\, \iota^A\}$ with $o_A=\alpha_A$ and
$o_A\iota^A=1$. One can then write
\[
\kappa_{AB}=o_A o_B, \qquad \phi_{AB}= \varphi o_A o_B.
\]
Substituting the first of the above expressions into the Killing
spinor equation $\nabla_{A'(A}\kappa_{BC)}=0$ one finds that
\begin{equation}
\gamma = \alpha = \sigma =\kappa =0, \qquad \rho+ \epsilon =0, \qquad 
\tau+\beta =0.
\label{KillingSpinorEquationAlgebraicSpecial}
\end{equation}
Moreover, one finds that the Hermitian spinor $\xi_{AA'}$ can be
expressed as
\[
\xi_{AA'} = -3 \beta o_A \bar{o}_{A'} + 3 \epsilon o_A \bar{\iota}_{A'}.
\]
The spinorial version of equation
\eqref{CovariantlyConstantKillingVector} implies $D\xi_{AA'}=0$, $\Delta
\xi_{AA'}=0$, $\delta \xi_{AA'}=0$ and $\bar{\delta} \xi_{AA'}=0$. In particular, from $\Delta \xi_{AA'}=0$ and $\bar{\delta}\xi_{AA'}=0$,
expanding in terms of the basis one finds that
$\beta\tau=0$ and $\epsilon\rho=0$. Combining this expression with the third and fourth conditions in
\eqref{KillingSpinorEquationAlgebraicSpecial} one concludes that
\[
\tau=\beta=\epsilon=\rho=0.
\]
It follows then that 
\[
\xi_{AA'} = 0.
\]

\end{proof}

As we want to use the asymptotics of the Killing vector $\xi_{AA'}$ in
the characterisation of the Kerr and Kerr-Newman spacetime, we will
rule out the algebraically special case and assume that the Killing
spinor is algebraically general ---i.e. $\kappa_{AB}\kappa^{AB}\neq0$.

\medskip
\noindent
\textbf{Remark.} Note that as we have
$\Psi_{ABCD}\propto\kappa_{(AB}\kappa_{CD)}$, then the conditions
$\Psi_{ABCD}\Psi^{ABCD}\neq0, \Psi_{ABCD}\neq0$ imply that the Killing
spinor is algebraically general and non-zero,
i.e. $\kappa_{AB}\kappa^{AB}\neq0, \kappa_{AB}\neq0$. These two
conditions on the curvature are precisely the ones assumed in Theorem
6 of \cite{BaeVal10b}, and so the characterisation of Kerr in terms of
Killing spinors presented in that article is essentially the same as
the one presented here. Despite this, we reproduce the result in this
paper for completeness and ease of comparison with the electrovacuum
case. We do this using the local result of Mars given in \cite{Mar99},
whereas the proof in \cite{BaeVal10b} uses the global result from
\cite{Mar00}. In the absence of a generalisation to the electrovacuum
case of the characterisation of \cite{Mar00}, our analysis of the
Kerr-Newman spacetime must make use of the local result by Wong
\cite{Won09}.

\section{Boundary conditions}
\label{boundary}
\label{Section:BoundaryConditions}

This section provides a brief discussion of the boundary conditions
which will be used in conjunction with the properties of Killing
spinors to characterise the Kerr and Kerr-Newman spacetimes.

\subsection{Stationary asymptotically flat ends}

In the remainder of this article we will be particularly interested in spacetimes
with a \emph{stationary asymptotically flat 4-end} ---see e.g. \cite{Won09}.

\begin{definition}
\label{Definition:StationaryEnd}
A stationary asymptotically flat 4-end in an electrovacuum spacetime $(\mathcal{M},\bmg,\bmF)$ is an open
submanifold $\mathcal{M}_\infty\subset \mathcal{M}$ diffeomorphic to
$I\times ( \mathbb{R}^3\setminus \mathcal{B}_R)$ where $I\subset
\mathbb{R}$ is an open interval and $\mathcal{B}_R$ is a closed ball
of radius $R$. In the local coordinates $(t,x^\alpha)$ defined by the
diffeomorphism the components $g_{\mu\nu}$ and $F_{\mu\nu}$ of the
metric $\bmg$ and the Faraday tensor $\bmF$ 
satisfy 
\begin{subequations}
\begin{eqnarray}
&& |g_{\mu\nu} -\eta_{\mu\nu}| +|r \partial_\alpha g_{\mu\nu} | \leq C
r^{-1}, \label{StationaryEndCondition1}\\
&& |F_{\mu\nu}| + |r \partial_\alpha F_{\mu\nu}| \leq C' r^{-2}, \label{StationaryEndCondition2}\\
&& \partial_t g_{\mu\nu}=0, \label{StationaryEndCondition3}\\
&& \partial_t F_{\mu\nu} =0, \label{StationaryEndCondition4}
\end{eqnarray}
\end{subequations}
where $C$ and $C'$ are positive constants, $r\equiv (x^1)^2 +
(x^2)^2 + (x^3)^2,$ and $\eta_{\mu\nu}$ denote the components of the
Minkowski metric in Cartesian coordinates. 
\end{definition}

\medskip
\noindent
\textbf{Remark 1.} It follows from condition
\eqref{StationaryEndCondition3} in Definition
\ref{Definition:StationaryEnd} that the stationary asymptotically flat
end $\mathcal{M}_\infty$ is endowed with a Killing vector $\xi^a$
which takes the form $\bmpartial_t$ ---a so-called \emph{time
translation}. From condition \eqref{StationaryEndCondition4} one has
that the electromagnetic field inherits the symmetry of the spacetime ---that is
$\mathcal{L}_\xi \bmF=0$, with $\mathcal{L}_\xi$ the Lie derivative
along $\xi^a$.

\medskip
Of particular interest will be those stationary asymptotically flat
ends \emph{generated by a Killing spinor}:

\begin{definition}
\label{Definition:KillingGeneratedEnd}
A stationary asymptotically flat end $\mathcal{M}_\infty\subset \mathcal{M}$ in an
electrovacuum spacetime $(\mathcal{M},\bmg,\bmF)$ endowed with a
Killing spinor $\kappa_{AB}$ is said to be generated by a
Killing spinor if the spinor $\xi_{AA'}\equiv
\nabla^B{}_{A'} \kappa_{AB}$ is the spinorial
counterpart of the Killing vector $\xi^a$. 
\end{definition}

\medskip
\noindent
\textbf{Remark 2.} Stationary spacetimes have a natural definition of mass in terms of
the Killing vector $\xi^{a}$ that generates the isometry  ---the
so-called \emph{Komar mass} $m$ defined as
\[
m\equiv
-\frac{1}{8\pi}\lim_{r\rightarrow\infty}\int_{S_{r}}\epsilon_{abcd}\nabla^c\xi^d \mbox{d}S^{ab}
\]
where $S_{r}$ is the sphere of radius r centred at $r=0$ and
$\mbox{d}S^{ab}$ is the binormal vector to $S_r$.  Similarly, one can
define the \emph{total electromagnetic charge} of the spacetime via the integral
\[
q= -\frac{1}{4\pi}\lim_{r\rightarrow\infty}\int_{S_{r}} F_{ab} \mbox{d}S^{ab}.
\]


\medskip
\noindent
\textbf{Remark 3.} In the asymptotic region the components of the metric can be written
in the form
\begin{eqnarray*}
&& g_{00} = 1- \frac{2 m}{r} + O(r^{-2}), \\
&& g_{0\alpha} = \frac{4 \epsilon_{\alpha\beta\gamma} S^\beta
  x^\gamma}{r^3} +O(r^{-3}), \\
&& g_{\alpha\beta} = -\delta_{\alpha\beta} + O(r^{-1}),
\end{eqnarray*}
where $m$ is the Komar mass of $\xi^a$ in the end
$\mathcal{M}_\infty$,  $\epsilon_{\alpha\beta\gamma}$ is the flat
rank 3 totally antisymmetric tensor and $S^\beta$ denotes a
3-dimensional tensor with constant entries. For the components of the Faraday
tensor one has that
\begin{eqnarray*}
&& F_{0\alpha} = \frac{q}{r^2} + O(r^{-3}), \\
&& F_{\alpha\beta} = O(r^{-3})
\end{eqnarray*}
---see e.g. \cite{Sim84b}. Thus, to leading order any stationary
electrovacuum spacetime is asymptotically a Kerr-Newman spacetime. 

\medskip
\noindent
\textbf{Remark 4.} In the case of the exact Kerr-Newman spacetime with
mass $m$, angular momentum $a$ and charge $q$ a NP
frame $\{l^a,\, n^a,\, m^a,\, \bar{m}^a \}$ with associated spin dyad
$\{ o^A,\, \iota^A\}$ such that the scalars $\varkappa$, $\varphi$ and
$\psi$ introduced in equations \eqref{KappaBasis}, \eqref{PhiBasis} and \eqref{Psibasis},
respectively, take the form
\begin{eqnarray*}
&& \varkappa = \frac{2}{3} (r - \mbox{i} a \cos\theta), \\
&&  \varphi = \frac{q}{(r-\mbox{i}a \cos \theta)^2}, \\
&& \psi = \frac{6}{(r -\mbox{i}a \cos\theta)^3}\left(  \frac{q^2}{r+
    \mbox{i} a \cos\theta } -m \right),
\end{eqnarray*}
where $r$ denotes the standard \emph{Boyer-Lindquist} radial
coordinate ---see \cite{AndBaeBlu15} for more details.

\subsection{Killing spinor and Killing vector asymptotics}
\label{Section:KillingVectorAsymptotics}

In general, the spinor $\xi_{AA'}$ obtained from a Killing spinor
$\kappa_{AB}$ using formula \eqref{defKV} is not Hermitian. It is, however,
well know that for the Kerr-Newman spacetime $\xi_{AA'}$ is indeed the
spinorial counterpart of a real Killing vector $\xi^a$ ---see
e.g. \cite{AndBaeBlu15}. More generally, this observation applies to
any electrovacuum spacetime with a stationary asymptotically flat
end. To see this, we first notice the following:

\begin{lemma}
Let $(\mathcal{M},\bmg,\bmF)$ be a smooth electrovacuum spacetime with
a stationary asymptotically flat end $\mathcal{M}_{\infty}$, admitting
a complex Killing vector field $\xi^{a}$. If $\xi^{a}$ tends to a time
translation at infinity in $\mathcal{M}_{\infty}$, then $\xi^{a}$ is
in fact a real vector in  $\mathcal{M}_{\infty}$.
\end{lemma}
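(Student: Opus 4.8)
The key observation is that a Killing vector on a connected manifold is completely determined by its value and the value of its Killing form at a single point; equivalently, the pair $(\xi^a, H_{ab})$ satisfies a first-order linear ODE system along any curve (the Killing transport equations), so if $\xi^a$ and $\bar\xi^a$ agree to sufficient order at infinity they must agree everywhere in $\mathcal{M}_\infty$, and then by analyticity (or by a connectedness/propagation argument) throughout $\mathcal{M}$. I would therefore reduce the statement to showing that $\xi^a - \bar\xi^a$ and its first derivatives vanish in the asymptotic limit.

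\medskip
\textbf{Step 1: Killing transport.} Recall that for any Killing vector, equation \eqref{IdentityKillingVectorRiemannBasic} gives $\nabla_a\nabla_b\xi_c = R_{cba}{}^d\xi_d$; combined with $\nabla_a\xi_b = H_{ab}$ this yields the closed first-order system
\begin{equation*}
\nabla_a \xi_b = H_{ab}, \qquad \nabla_a H_{bc} = R_{cba}{}^d \xi_d,
\end{equation*}
which for the \emph{complex} Killing vector $\xi^a$ holds as stated, and whose complex conjugate is the same system for $\bar\xi^a$ (the curvature being real). Hence the difference $w^a \equiv \xi^a - \bar\xi^a$, together with $W_{ab}\equiv H_{ab}-\bar H_{ab}$, satisfies the same homogeneous linear transport system. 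Along any curve from $\mathcal{M}_\infty$ to a given point this is a linear ODE, so $w^a\equiv 0$ on the whole connected manifold provided $(w^a, W_{ab})$ vanishes at one point --- and it suffices to show it tends to zero along a curve going to infinity in the end.

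\medskip
\textbf{Step 2: Asymptotic vanishing.} In the coordinates $(t,x^\alpha)$ of Definition \ref{Definition:StationaryEnd}, $\xi^a$ tends to the time translation $\bmpartial_t$, i.e. $\xi^\mu \to \delta^\mu{}_0$, which has real components; thus $w^\mu = \xi^\mu - \overline{\xi^\mu}\to 0$ in the end. For the Killing form, $H_{\mu\nu}=\nabla_\mu\xi_\nu = \partial_\mu\xi_\nu - \Gamma^\lambda{}_{\mu\nu}\xi_\lambda$; since the Christoffel symbols decay like $\partial g = O(r^{-2})$ by \eqref{StationaryEndCondition1} and the leading part of $\xi_\nu$ is the real constant covector $\eta_{0\nu}$, the imaginary part $W_{\mu\nu}$ of $H_{\mu\nu}$ is governed entirely by $\partial_\mu(\mathrm{Im}\,\xi_\nu)$ plus terms that vanish at infinity. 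The only subtlety is that $H_{ab}$ need not itself tend to zero (Kerr has nonzero angular momentum), so one argues instead that its \emph{imaginary part} does: the first-order decaying corrections to the metric in Remark 3 are real, and the Killing equation $\nabla_{(\mu}\xi_{\nu)}=0$ forces $\xi_\nu$ to be, to the relevant order, a real Killing field of the asymptotic metric plus lower-order pieces; hence $\mathrm{Im}\,\xi_\nu = o(1)$ with $\partial_\alpha(\mathrm{Im}\,\xi_\nu)\to 0$, giving $W_{\mu\nu}\to 0$.

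\medskip
\textbf{Step 3: Conclude.} With $(w^a, W_{ab})\to 0$ along a curve exiting through $\mathcal{M}_\infty$, the homogeneous linear transport system of Step 1 forces $(w^a, W_{ab})\equiv 0$ in $\mathcal{M}_\infty$; in particular $\xi^a = \bar\xi^a$ there, so $\xi^a$ is a real vector in the end. (If one wants the conclusion on all of $\mathcal{M}$, connectedness plus Step 1 propagates the identity $\xi^a=\bar\xi^a$ everywhere, but the statement only claims it in $\mathcal{M}_\infty$.)

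\medskip
\textbf{Main obstacle.} The delicate point is Step 2: $H_{ab}$ itself does not decay, so one cannot simply say ``$\xi$ is asymptotically constant hence $H$ is asymptotically zero.'' One must extract the imaginary part and show that the Killing equation, together with the \emph{real} leading asymptotics of $g_{\mu\nu}$ and the hypothesis $\xi^\mu\to\delta^\mu{}_0$, pins down $\mathrm{Im}\,\xi_\nu$ to be $o(1)$ with derivatives tending to zero --- i.e. that a complex Killing field of an asymptotically real (flat) metric that is asymptotically a real translation cannot carry a persistent imaginary rotational part. This is where the precise decay rates in \eqref{StationaryEndCondition1} and the structure in Remark 3 are used.
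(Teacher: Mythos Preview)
Your underlying idea --- that the imaginary part $w^a = \xi^a - \bar\xi^a$ is a Killing vector which tends to zero at infinity and must therefore vanish --- is exactly the mechanism the paper uses. The paper simply writes $\xi^a = \xi_1^a + \mbox{i}\,\xi_2^a$, observes that $\xi_2^a$ is a real Killing vector with $\xi_2^a \to 0$ in the end, and then \emph{cites} the non-trivial result (Beig--Chru\'sciel, Christodoulou--O'Murchadha) that an asymptotically flat end admits no non-trivial Killing vector vanishing at infinity.

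The gap in your argument is Step~3. Uniqueness for the Killing transport system says: if $(w^a, W_{ab})$ vanishes \emph{at a point}, it vanishes everywhere. It does \emph{not} say that a solution tending to zero along a curve escaping to infinity must be identically zero. A homogeneous linear ODE $\dot Y = A(s)Y$ can perfectly well have non-trivial solutions decaying as $s\to\infty$; you would need a quantitative argument (decay rates for the curvature coefficients, a Gronwall-type estimate, control of $W_{ab}$ at a definite rate) to rule this out. That is precisely the content of the references the paper invokes, and it is not a one-line consequence of ODE uniqueness. Your Step~2 inherits the same difficulty: you need $W_{\mu\nu}\to 0$, but your justification (``the Killing equation forces $\mathrm{Im}\,\xi_\nu$ to be $o(1)$ with derivatives tending to zero'') is essentially a restatement of the desired conclusion rather than a proof, and the asymptotic expansion in Remark~3 was derived \emph{assuming} a real stationary Killing field, so it cannot be used to control a putative imaginary part.

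In short: your reduction to ``Killing vector vanishing at infinity $\Rightarrow$ trivial'' is correct and matches the paper, but that implication is itself a theorem requiring real work, not a consequence of linear transport. The cleanest fix is to do what the paper does and cite \cite{BeiChr96,ChrOMu81} at that point.
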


\begin{proof}
The complex Killing vector can be written
$\xi^{a}=\xi_{1}^{a}+\mbox{i}\xi^{a}_{2}$ for two real vectors
$\xi_{1}^{a}, \xi_{2}^{a}$, which are also Killing vectors by
linearity of the Killing vector equation. As a time translation
$(\partial_{t})^{a}$ is a real vector, we have
$\xi_{1}^{a}\rightarrow(\partial_{t})^{a}$ and
$\xi_{2}^{a}\rightarrow0$ as $r\rightarrow\infty$ in the
asymptotically flat end $\mathcal{M}_{\infty}$. However, it is well
known that there are no non-trivial real Killing vectors which vanish
at infinity ---see e.g. \cite{BeiChr96,ChrOMu81}.  Therefore, $\xi_{2}^{a}=0$ on
$\mathcal{M}_\infty$, and $\xi^{a}=\xi_{1}^{a}$ is a real Killing
vector.
\end{proof} 

Therefore, by assuming that the Killing vector $\xi^a$ is
asymptotically a time translation, then we can drop the assumption
requiring its spinorial equivalent $\xi_{AA'}$ to be a Hermitian
spinor. In fact, it is possible to replace this condition on the
Killing vector with an asymptotic condition on the Killing spinor, as
described in the following proposition:

\begin{proposition}
\label{Proposition:AsymptoticBehaviourSpinors}
Let $(\mathcal{M},\bmg,\bmF)$ denote an electrovacuum spacetime with
a stationary asymptotically flat end $\mathcal{M}_\infty$ generated by
a Killing spinor $\kappa_{AB}$. Then 
the functions $\varkappa$, $\varphi$
and $\psi$ as defined by equations \eqref{KappaBasis}, \eqref{Psibasis} and
\eqref{PhiBasis} satisfy 
\begin{eqnarray*}
&& \varkappa=\frac{2}{3}r+O(1), \\
&& \varphi=\frac{q}{r^2}+O(r^{-3}), \\
&& \psi=-\frac{6m}{r^3}+O(r^{-4}).
\end{eqnarray*}
Moreover, one has that
\[
\xi^2 \equiv \xi_{AA'} \xi^{AA'} = 1 + O(r^{-1}).
\]
\end{proposition}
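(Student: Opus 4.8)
The plan is to compare the given spacetime with the exact Kerr--Newman asymptotics recorded in Remark~4 and to extract the leading behaviour of $\varkappa$, $\varphi$, $\psi$ directly from the asymptotic flatness conditions \eqref{StationaryEndCondition1}--\eqref{StationaryEndCondition4} together with the structural relations already derived, namely $\varkappa^2 \varphi = \mathfrak{Q}$ and $\varkappa^3 \psi - 6|\mathfrak{Q}|^2/\bar\varkappa = \mathfrak{M}'$ (equations \eqref{ElectrovacuumIntegrationConstant} and \eqref{MprimeDefinition}). The key observation is that these two algebraic relations reduce the problem to determining the leading-order behaviour of the \emph{single} scalar $\varkappa$; once $\varkappa = \tfrac23 r + O(1)$ is established, the expansions for $\varphi$ and $\psi$ follow immediately by substitution, and the constants $\mathfrak{Q}$, $\mathfrak{M}'$ get identified (up to the conventional factors) with $q$ and $-m$ respectively by matching against the Komar-type integrals in Remark~2 and the metric/Faraday expansions in Remark~3.

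The first step is therefore to pin down $\varkappa$. Since the end is generated by the Killing spinor, $\xi_{AA'} = \nabla^B{}_{A'}\kappa_{AB}$ is the spinorial counterpart of the time-translation Killing vector $\xi^a = \bmpartial_t$ (using the preceding lemma to drop Hermiticity), whose norm tends to $1$ at infinity by \eqref{StationaryEndCondition1}. I would use the basis expression
\[
\xi_{AA'} = -\tfrac{3}{2}\varkappa\big(\mu o_A \bar o_{A'} - \pi o_A \bar\iota_{A'} - \tau \iota_A \bar o_{A'} + \rho \iota_A \bar\iota_{A'}\big)
\]
derived in Section~\ref{Section:ErnstFormExpansions}, compute $\xi_{AA'}\xi^{AA'}$, and compare with the known Kerr--Newman frame where $\varkappa = \tfrac23(r - \mbox{i}a\cos\theta)$. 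The cleanest route is to invoke the integrability/propagation already built into the formalism: the relation \eqref{derK} for $\nabla_{AA'}\varkappa$ shows $\nabla_a \varkappa$ is built from $\xi_a$ and the NP coefficients, and since $\xi^a \to \bmpartial_t$ with the decay rates \eqref{StationaryEndCondition1}, the real part of $\varkappa$ must grow linearly in an affine/areal radius while its derivative transverse to the radial direction stays bounded, forcing $\varkappa = \tfrac23 r + O(1)$ after fixing the normalisation of the frame to agree with the standard Boyer--Lindquist one at leading order. Then $\varphi = \mathfrak{Q}/\varkappa^2 = \tfrac{9}{4}\mathfrak{Q}\, r^{-2} + O(r^{-3})$ and $\psi = \varkappa^{-3}(\mathfrak{M}' + 6|\mathfrak{Q}|^2/\bar\varkappa) = \tfrac{27}{8}\mathfrak{M}'\, r^{-3} + O(r^{-4})$.

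Next I would identify the constants. Feeding the leading $F_{0\alpha} = q/r^2 + O(r^{-3})$ from Remark~3 into the spinorial dictionary $\mathcal{F}_{AA'BB'} = 2\phi_{AB}\epsilon_{A'B'}$ with $\phi_{AB} = \varphi o_{(A}\iota_{B)}$ determines $\tfrac94\mathfrak{Q}$ in terms of $q$, yielding $\varphi = q/r^2 + O(r^{-3})$; similarly, matching $\psi$ against the leading Weyl component computed from $g_{00} = 1 - 2m/r + O(r^{-2})$ via the standard NP relation between $\Psi_2$ and the Newtonian potential gives $\tfrac{27}{8}\mathfrak{M}' = -6m$, i.e. $\psi = -6m/r^3 + O(r^{-4})$. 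Finally, for the norm statement, one expands $\xi^2 = \xi_{AA'}\xi^{AA'}$ using the basis form of $\xi_{AA'}$ above together with the Hermiticity-type relations \eqref{Hermiticy} (which now hold automatically since $\xi^a$ is real); since $\xi^a \to \bmpartial_t$ and $g_{00} = 1 - 2m/r + O(r^{-2})$, one gets $\xi^2 = 1 + O(r^{-1})$ directly from \eqref{StationaryEndCondition1}, with no further computation needed.

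The main obstacle I anticipate is the first step: rigorously extracting $\varkappa = \tfrac23 r + O(1)$ rather than merely $\varkappa = O(r)$. The relations $\varkappa^2\varphi = \mathfrak{Q}$ and \eqref{MprimeDefinition} are \emph{homogeneous} in $\varkappa$, so they constrain the ratios of $\varphi,\psi$ to powers of $\varkappa$ but not the absolute scale or the linear-in-$r$ coefficient of $\varkappa$ itself; that coefficient is a statement about how the Killing-spinor-adapted NP frame is tied to the asymptotically Cartesian coordinates, and it must be read off from the decay conditions on $\xi^a = \bmpartial_t$ together with a choice of frame normalisation consistent with Remark~4. I would handle this by working in the asymptotic coordinates, writing the principal null directions $o^A, \iota^A$ of the Weyl spinor explicitly to leading order (they are constrained because the spacetime is Petrov type D with $\Psi_{ABCD} \propto \kappa_{(AB}\kappa_{CD)}$), and checking that the resulting $\xi_{AA'}$ reproduces $\bmpartial_t$ only when $\mathrm{Re}\,\varkappa \sim \tfrac23 r$; the $O(1)$ error and the imaginary part $-\tfrac23 \mbox{i}a\cos\theta$ are then subleading and consistent with \eqref{StationaryEndCondition1}.
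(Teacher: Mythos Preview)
Your approach is correct and rests on the same underlying fact as the paper's, namely that a stationary asymptotically flat electrovacuum end is, to leading order, Kerr--Newman (this is the content of Remark~3, attributed to Simon). The paper's proof is essentially a two-line citation: since the geometry agrees with Kerr--Newman at leading order, the frame-dependent scalars $\varkappa$, $\varphi$, $\psi$ must agree with their exact Kerr--Newman values from Remark~4, and $\xi^2 = g_{00} = 1 + O(r^{-1})$ is immediate. Your proposal recovers the same conclusions but through a longer route: you invoke the algebraic identities \eqref{ElectrovacuumIntegrationConstant} and \eqref{MprimeDefinition} to reduce everything to $\varkappa$, then argue separately for $\varkappa = \tfrac{2}{3}r + O(1)$, and finally back-substitute to identify $\mathfrak{Q}$ and $\mathfrak{M}'$ with $q$ and $m$. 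This extra machinery is not wrong, but it is not needed here --- indeed the paper postpones the identification $\mathfrak{Q} = \tfrac{4}{9}q$, $\mathfrak{M}' = -\tfrac{16}{9}m$ to Section~\ref{Section:WongCharacterisation09}, where it is \emph{derived from} the present proposition rather than used to prove it. Your honest acknowledgement that the step $\varkappa = \tfrac{2}{3}r + O(1)$ requires tying the Killing-spinor-adapted frame to the asymptotic coordinates is exactly right, and your proposed resolution (match the principal null frame to the leading Kerr--Newman one) is precisely what the paper's citation of Simon and Remark~4 accomplishes in one stroke.
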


\begin{proof}
The analysis in \cite{Sim84b} shows that to leading order the
electrovacuum spacetime $(\mathcal{M},\bmg,\bmF)$ coincides on
$\mathcal{M}_\infty$ with the Kerr-Newman spacetime. Thus, the
expansions for the fields $\varkappa$, $\varphi$ and $\varphi$ must
coincide, to leading order with their expressions for the Kerr-Newman
spacetime ---see \cite{AndBaeBlu15}.  

\end{proof}

\section{Characterisations of the Kerr spacetime}
\label{Section:CharacteristionKerr}

The motivation behind our analysis is the following theorem by M. Mars ---see
\cite{Mar00}:

\begin{theorem}
\label{TheoremMars00}
Let $(\mathcal{M},\bmg)$ be a smooth, vacuum spacetime admitting a
    Killing vector $\xi^a$ with selfdual Killing form
    $\mathcal{H}_{ab}$. Let $\mathcal{M}$ satisfy:
\begin{itemize}
\item[(i)] there exists a non-empty region $\mathcal{M}_\bullet\subset
  \mathcal{M}$ where 
\[
\mathcal{H}^2\equiv
  \mathcal{H}_{ab}\mathcal{H}^{ab}\neq 0;
\]
\item[(ii)] The selfdual Killing form and the Weyl tensor are related
  by
\begin{equation}
\mathcal{C}_{abcd} = H \left(\mathcal{H}_{ab}\mathcal{H}_{cd}
  -\frac{1}{3}\mathcal{H}^2 \mathcal{I}_{abcd}\right)
\label{VanishingMarsSimonVacuum}
\end{equation}
where 
\[
\mathcal{I}_{abcd} \equiv\frac{1}{4}(g_{ac}g_{bd} -g_{ad}g_{bc}+
\mbox{\em i} \epsilon_{abcd})
\]
and $H$ is a complex scalar function.
\end{itemize}
Then there exist two complex constants $\mathfrak{l}$ and $\mathfrak{c}$ such that
\[
H = \frac{6}{\mathfrak{c}-\chi}, \qquad \mathcal{H}^2 = -\mathfrak{l} (\mathfrak{c}-\chi)^4.
\] 
If, in addition, $\mathfrak{c}=1$ and $\mathfrak{l}$ is real positive, then
$(\mathcal{M},\bmg)$ is locally isometric to the Kerr spacetime. 
\end{theorem}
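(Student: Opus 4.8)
The plan is to separate the argument into a \emph{differential} part, which extracts the two integration constants and the formulas for $H$ and $\mathcal{H}^2$ from condition (ii) and the vacuum field equations, and a \emph{rigidity} part, which uses the extra hypotheses $\mathfrak{c}=1$, $\mathfrak{l}\in\mathbb{R}_{>0}$ to identify the metric as Kerr. It is worth noting that in two-spinor language, writing $H_{AA'BB'}=\eta_{AB}\epsilon_{A'B'}+\bar{\eta}_{A'B'}\epsilon_{AB}$, condition (ii) is equivalent to $\Psi_{ABCD}\propto\eta_{(AB}\eta_{CD)}$ --- that is, the Weyl spinor and the Killing-form spinor share their principal spinors --- which makes the Petrov type D structure transparent and is precisely the situation produced by a Killing spinor through the integrability condition \eqref{intcond} and equation \eqref{eta2kappa}.

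\textbf{Differential part.} Recall from Section \ref{Section:ErnstPotentialsTheory} that in vacuum the Ernst form $\chi_a=2\xi^b\mathcal{H}_{ba}$ is closed, hence locally $\chi_a=\nabla_a\chi$, and that the vacuum form of \eqref{IndentityKillingVectorRiemann} reads $\nabla_a\mathcal{H}_{bc}=\mathcal{C}_{cba}{}^d\xi_d$ while $\nabla^a\mathcal{H}_{ab}=0$. First I would contract $\nabla_a\mathcal{H}_{bc}=\mathcal{C}_{cba}{}^d\xi_d$ with $\mathcal{H}^{bc}$, substitute (ii), and simplify using the algebra of self-dual $2$-forms ($\mathcal{I}_{ab}{}^{cd}\mathcal{H}_{cd}=\mathcal{H}_{ab}$, $\mathcal{H}_{ac}\mathcal{H}_b{}^c=\tfrac{1}{4}g_{ab}\mathcal{H}^2$, $\mathcal{H}_{ab}\xi^b=-\tfrac{1}{2}\chi_a$), obtaining $\nabla_a\mathcal{H}^2=-\tfrac{2}{3}H\mathcal{H}^2\chi_a$. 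Next I would take the divergence of (ii): the once-contracted vacuum Bianchi identity gives $\nabla^a\mathcal{C}_{abcd}=0$, and with $\nabla^a\mathcal{H}_{ab}=0$, $\nabla^a\mathcal{I}_{abcd}=0$, the analogue of the first identity for $\nabla^a\mathcal{H}_{cd}$, and the expression just found for $\nabla_a\mathcal{H}^2$, expanding $\nabla^a\bigl[H(\mathcal{H}_{ab}\mathcal{H}_{cd}-\tfrac{1}{3}\mathcal{H}^2\mathcal{I}_{abcd})\bigr]=0$ and separating the independent tensor structures yields $\nabla_a H=\tfrac{1}{6}H^2\chi_a$. Since $\chi_a=\nabla_a\chi$ is exact, $\nabla_a(1/H)=-\tfrac{1}{6}\nabla_a\chi$ integrates to $H=6/(\mathfrak{c}-\chi)$ for some $\mathfrak{c}\in\mathbb{C}$; feeding this back gives $\nabla_a\log\mathcal{H}^2=-\tfrac{4}{\mathfrak{c}-\chi}\nabla_a\chi=4\,\nabla_a\log(\mathfrak{c}-\chi)$, so $\mathcal{H}^2=-\mathfrak{l}(\mathfrak{c}-\chi)^4$ for some $\mathfrak{l}\in\mathbb{C}$, which is nonzero by hypothesis (i). This establishes the first assertion.

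\textbf{Rigidity part, and the main obstacle.} Now set $\mathfrak{c}=1$ and $\mathfrak{l}\in\mathbb{R}_{>0}$. With $H=6/(1-\chi)$ fixed, condition (ii) is exactly the statement that the Mars--Simon tensor of $(\mathcal{M},\bmg,\xi^a)$ vanishes (in the normalisation in which the Ernst potential tends to $1$), and it forces the Weyl tensor into canonical type D form with repeated principal null directions aligned with the eigenbivectors of $\mathcal{H}_{ab}$, which are geodesic and shear-free by the Goldberg--Sachs theorem. I would then invoke the local characterisation of the Kerr spacetime by Mars \cite{Mar99}: introducing a principal null tetrad and adapted (Boyer--Lindquist- or Kerr--Schild-type) coordinates, the vacuum field equations, the Killing equations and the explicit forms $H=6/(1-\chi)$, $\mathcal{H}^2=-\mathfrak{l}(1-\chi)^4$ reduce to a closed system of ordinary differential equations whose integration reproduces the Kerr metric, a second (commuting, axial) Killing vector emerging in the process; here the reality and positivity of $\mathfrak{l}$ is exactly what guarantees that the reconstructed solution is the real Lorentzian Kerr spacetime --- fixing the mass to be real and positive and excluding a NUT-type parameter, the surviving free real parameter being the rotation $a$. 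This last reconstruction is the main obstacle: the differential part is essentially bookkeeping with the Bianchi identity and the algebra of self-dual $2$-forms, whereas tracking all the Newman--Penrose connection coefficients, checking that the structure equations close up precisely to Kerr, and pinning down the parameter identification is the substantive analytic content, and it is there that one leans on the detailed computations of Mars in \cite{Mar99,Mar00}.
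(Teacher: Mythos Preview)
The paper does not prove Theorem~\ref{TheoremMars00}; it is quoted verbatim from Mars \cite{Mar00} and then used as a black box in the derivation of Proposition~\ref{Proposition:MarsCharacterisationKS} and Theorem~2. So there is no ``paper's own proof'' to compare your attempt against.

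That said, your sketch is a fair outline of the argument Mars actually gives. The differential part --- contracting $\nabla_a\mathcal{H}_{bc}=\mathcal{C}_{cba}{}^d\xi_d$ with $\mathcal{H}^{bc}$ to obtain an ODE for $\mathcal{H}^2$ along the Ernst gradient, then taking the divergence of hypothesis (ii) and using the contracted Bianchi identity to obtain an ODE for $H$, and integrating both against $\nabla_a\chi$ --- is essentially how the constants $\mathfrak{c}$ and $\mathfrak{l}$ arise in \cite{Mar00}. Your identification of the rigidity step as the genuine analytic content, requiring the detailed Newman--Penrose integration carried out in \cite{Mar99,Mar00}, is also accurate; that step is not something one can compress into a paragraph, and you are right to flag it as the obstacle rather than pretend to reproduce it.

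One small caution: in the differential part you should check the numerical factors carefully (the self-dual algebra identities you list are correct, but the contraction of $\mathcal{I}_{abcd}$ with $\mathcal{H}^{ab}$ needs the index ordering tracked precisely to get the coefficient $-\tfrac{2}{3}$ in $\nabla_a\mathcal{H}^2$ and $\tfrac{1}{6}$ in $\nabla_a H$). These are exactly the coefficients that make $H=6/(\mathfrak{c}-\chi)$ and the exponent $4$ in $\mathcal{H}^2$ come out right, so they are not cosmetic.
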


\medskip
\noindent
\textbf{Remark 1.} It is important to emphasise that in the above
Theorem the existence of the constants $\mathfrak{c}$ and
$\mathfrak{l}$ and the functional dependence of $H$ and $\mathcal{H}^2$
with respect to $\chi$ are a consequence
of the hypotheses of the theorem ---this should be contrasted with the electrovacuum
case in which the existence of the analogue constants needs to be
assumed. 

\medskip
\noindent
\textbf{Remark 2.} A particular case of Theorem \ref{TheoremMars00}
occurs when $(\mathcal{M},\bmg)$ is \emph{a priori} assumed to have an
stationary asymptotically flat end $\mathcal{M}_\infty$ with the
Killing vector $\xi^a$ tending asymptotically to a time translation at
infinity and such that the Komar mass
associated to $\xi^a$ is non-zero. These
assumptions ensure that $\mathcal{H}^2\neq 0$ in a region of the
spacetime ---namely, in $\mathcal{M}_\infty$. Thus, one only needs to
verify condition \eqref{VanishingMarsSimonVacuum} to conclude that
\[
H = \frac{6}{1-\chi}
\]
and that the spacetime is locally isometric to the Kerr spacetime
---see Theorem 2 in \cite{Mar99}.

\medskip
\noindent
\textbf{Remark 3.} In the subsequent discussion we will make use of the spinorial
transcription of the conditions in the previous Theorem. In
particular, we notice that the content of  the combination $\mathcal{H}_{ab}\mathcal{H}_{cd}
  -\frac{1}{3}\mathcal{H}^2 \mathcal{I}_{abcd}$ can be encoded in
  terms of the spinor $\eta_{AB}$ as defined in equation \eqref{DefinitionEta} as
\[
\left(4\eta_{AB}\eta_{CD}-\frac{2}{3}\eta_{EF}\eta^{EF}(\epsilon_{AD}\epsilon_{BC}+\epsilon_{AC}\epsilon_{BD})\right)\epsilon_{A'B'}\epsilon_{C'D'} =4\eta_{(AB}\eta_{CD)}\epsilon_{A'B'}\epsilon_{C'D'}
\]
where the last expression follows from a decomposition in irreducible
terms. Thus, condition \eqref{VanishingMarsSimonVacuum} can be
concisely expressed in terms of spinors as
\begin{equation}
\Psi_{ABCD} = 2H \eta_{(AB}\eta_{CD)}.
\label{VanishingMarsSimonVacuumSpinorial}
\end{equation}
Finally, it is noticed that the condition $\mathcal{H}^2\neq 0$ can be
expressed as
\[
8\eta_{AB} \eta^{AB}\neq 0.
\]

\subsection{Killing spinors and the Mars characterisation}

In what follows we analyse the extent to which existence of a Killing
spinor on a vacuum spacetimes implies the hypotheses of the
characterisation of Kerr given in Theorem \ref{TheoremMars00}. For
bookkeeping purposes we explicitly state the assumptions to be made in
the remainder of this section:

\begin{assumption}
Let $(\mathcal{M},\bmg)$ be a smooth vacuum spacetime and let
$\mathcal{K}\subset \mathcal{M}$ such that:
\begin{itemize}
\item[(i)] on $\mathcal{K}$ there exists an algebraically general
  Killing spinor $\kappa_{AB}$;

\item[(ii)] the spinor $\xi_{AA'} \equiv \nabla^B{}_{A'} \kappa_{AB}$
  is on $\mathcal{K}$ the spinorial counterpart of a real Killing
  spinor $\xi^a$ ---i.e. $\xi_{AA'}$ is Hermitian. 

\end{itemize}
\end{assumption}

Under the above assumptions, it follows from combining the basis
expansion for $\Psi_{ABCD}$ and $\eta_{AB}$, equations \eqref{Psibasis} and
\eqref{etabasis}, respectively, that
\[
\Psi_{ABCD} = \frac{16}{\varkappa^2\psi} \eta_{(AB}\eta_{CD)}.
\]
Thus, hypothesis \emph{(ii)} of Theorem \ref{TheoremMars00} is
satisfied with
\[
H=\frac{8}{\varkappa^{2}\psi}
\]
---cf. equation \eqref{VanishingMarsSimonVacuumSpinorial}. Using the
expression for the Ernst potential predicted by the theory of Killing
spinors, equation \eqref{ErnstFormKillingSpinorTheory}, one obtains
that
\[
H = \frac{6}{\mathfrak{c}-\chi}
\]
which is precisely the form for $H$ predicted by Theorem
\ref{TheoremMars00}. From this expression one further concludes that
\[
\mathcal{H}^2 = -\frac{\mathfrak{M}}{3}\left(  \frac{4}{3\mathfrak{M}} \right)^3(\mathfrak{c}-\chi)^4.
\]
This, again, is the form predicted by Theorem
\ref{TheoremMars00}. 

\medskip
The above observations allow us to formulate the following
\emph{Killing spinor version} of Theorem \ref{TheoremMars00}:

\begin{proposition}
\label{Proposition:MarsCharacterisationKS}
Let $(\mathcal{M},\bmg)$ denote a smooth vacuum spacetime endowed with
a Killing spinor $\kappa_{AB}$ with $\kappa_{AB}\kappa^{AB}\neq 0$
such that the spinor $\xi_{AA'} \equiv \nabla{}^B{}_{A'} \kappa_{AB}$
is Hermitian. Then there exist two complex constants $\mathfrak{l}$ and $\mathfrak{c}$ such that
\[
\mathcal{H}^2 = -\mathfrak{l} (\mathfrak{c}-\chi)^4.
\] 
If, in addition, $\mathfrak{c}=1$ and $\mathfrak{l}$ is real positive, then
$(\mathcal{M},\bmg)$ is locally isometric to the Kerr spacetime. 
\end{proposition}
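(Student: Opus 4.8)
The plan is to read the Proposition off Theorem~\ref{TheoremMars00}: the entire content is to check that a vacuum spacetime carrying an algebraically general Killing spinor with Hermitian $\xi_{AA'}$ automatically satisfies the two hypotheses of Mars's characterisation, after which the conclusion (including the Kerr identification when $\mathfrak{c}=1$ and $\mathfrak{l}$ is real positive) is immediate. First I would set up the kinematics. Since $\kappa_{AB}\kappa^{AB}\neq0$ the Killing spinor is algebraically general, so I introduce a normalised dyad $\{o^A,\iota^A\}$ aligned with its principal spinors and write $\kappa_{AB}=\varkappa\, o_{(A}\iota_{B)}$ as in \eqref{KappaBasis}; the integrability condition \eqref{intcond} then forces $\Psi_{ABCD}=\psi\, o_{(A}o_B\iota_C\iota_{D)}$ as in \eqref{Psibasis}. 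In vacuum $\Phi_{ABA'B'}=0$, so the matter alignment condition \eqref{MatterAlignmentCondition} holds trivially, and together with the assumed Hermiticity of $\xi_{AA'}$ this makes $\xi_{AA'}$ the spinor counterpart of a real Killing vector $\xi^a$; hence $\mathcal{H}_{ab}$, the spinor $\eta_{AB}$ of \eqref{DefinitionEta}, and (since in vacuum the Ernst form is closed) the Ernst potential $\chi$ of \eqref{ErnstFormKillingSpinorTheory} are all well defined.

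Next I would verify Mars's hypothesis (ii). From \eqref{eta2kappa} and the dyad expansions one obtains $\eta_{AB}=\eta\, o_{(A}\iota_{B)}$ with $\eta=\tfrac14\varkappa\psi$ as in \eqref{etabasis}, so that $\eta_{(AB}\eta_{CD)}=\tfrac{1}{16}\varkappa^2\psi^2\, o_{(A}o_B\iota_C\iota_{D)}$. Comparing with \eqref{Psibasis} gives
\[
\Psi_{ABCD}=\frac{16}{\varkappa^2\psi}\,\eta_{(AB}\eta_{CD)},
\]
which is exactly the spinorial form \eqref{VanishingMarsSimonVacuumSpinorial} of condition \eqref{VanishingMarsSimonVacuum} with $H=8/(\varkappa^2\psi)$; substituting $\varkappa^2\psi=\tfrac43(\mathfrak{c}-\chi)$ from \eqref{ErnstFormKillingSpinorTheory} turns this into $H=6/(\mathfrak{c}-\chi)$, the functional dependence dictated by the theorem.

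For hypothesis (i) and the form of $\mathcal{H}^2$, I would use $\mathcal{H}_{AA'BB'}=2\eta_{AB}\epsilon_{A'B'}$ to get $\mathcal{H}^2=8\eta_{AB}\eta^{AB}=-\tfrac14\varkappa^2\psi^2$, which is nonzero wherever $\psi\neq0$ (recall $\varkappa\neq0$ by algebraic generality). If $\psi\equiv0$ the spacetime is flat and the statement holds trivially (flat spacetime is Kerr with zero mass); otherwise there is a non-empty region on which $\mathcal{H}^2\neq0$, so both hypotheses of Theorem~\ref{TheoremMars00} are met. Introducing the constant $\mathfrak{M}=\varkappa^3\psi$ of \eqref{VarkappaPsi:Vacuum} and combining it with $\varkappa^2\psi=\tfrac43(\mathfrak{c}-\chi)$, one solves for $\varkappa$ and $\psi$ in terms of $\mathfrak{c}-\chi$ and finds $\mathcal{H}^2=-\tfrac{64}{81\mathfrak{M}^2}(\mathfrak{c}-\chi)^4$, i.e. the claimed relation with $\mathfrak{l}=64/(81\mathfrak{M}^2)$ — although one may equally just invoke the conclusion of Theorem~\ref{TheoremMars00}, which asserts the existence of such constants. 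The final sentence then follows directly from that theorem; I would also remark that $\mathfrak{c}$ can always be normalised to $1$ by exploiting the additive ambiguity in the Ernst potential $\chi$, so that the genuinely restrictive hypothesis is the reality and positivity of $\mathfrak{l}$, but since Theorem~\ref{TheoremMars00} is stated conditionally I keep the same phrasing.

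I do not expect a serious obstacle: all the depth resides in the cited theorem of Mars, and the argument here is bookkeeping. The only points requiring a little care are the degenerate case $\psi\equiv0$ (ensuring there is a non-empty region with $\mathcal{H}^2\neq0$, or else disposing of it as flat spacetime) and the algebra that matches the intrinsically defined scalars $\varkappa,\psi$ and the integration constant $\mathfrak{M}$ to the constants $\mathfrak{c},\mathfrak{l}$ appearing in Theorem~\ref{TheoremMars00}, while tracking the constant of integration in $\chi$.
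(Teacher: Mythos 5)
Your proposal is correct and follows essentially the same route as the paper: verify Mars's alignment condition in the form $\Psi_{ABCD}=2H\,\eta_{(AB}\eta_{CD)}$ with $H=8/(\varkappa^2\psi)=6/(\mathfrak{c}-\chi)$ via \eqref{etabasis} and \eqref{ErnstFormKillingSpinorTheory}, then express $\mathcal{H}^2=-4\eta^2=-\tfrac{64}{81\mathfrak{M}^2}(\mathfrak{c}-\chi)^4$ (matching the paper's $-\tfrac{\mathfrak{M}}{3}(4/3\mathfrak{M})^3$ coefficient) and invoke Theorem~\ref{TheoremMars00}. Your extra care about the degenerate case $\psi\equiv 0$ and the normalisation of $\mathfrak{c}$ goes slightly beyond what the paper writes down, but does not change the argument.
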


\subsubsection{A characterisation using asymptotic flatness}

In this subsection we simplify the previous discussion by assuming
that the set $\mathcal{K}\subset \mathcal{M}$ contains a stationary
asymptotically flat end with the Killing spinor $\kappa_{AB}$
generating the time translation Killing vector. 

\medskip
From Proposition \ref{Proposition:AsymptoticBehaviourSpinors} it
readily follows that
\[
(\mathfrak{c}-\chi)^4 = \frac{16m^4}{r^4} + O(r^{-5}).
\]
Similarly, one has, using equation \eqref{etabasis}, that 
\[
\mathcal{H}^2 = -4 \eta^2 = -\frac{4m^2}{r^4}+O(r^{-5}).
\]
Thus, by consistency with the required asymptotic behaviour of the Ernst potential,
one has to set $\mathfrak{c}=1$ and the constant $\mathfrak{l}$ in Proposition
\ref{Proposition:MarsCharacterisationKS} is given by
$\mathfrak{l}=1/4m^2$.

\medskip 
We can summarise the discussion of the previous section in the
following:

\begin{theorem}
Let $(\mathcal{M},\bmg)$ be a smooth vacuum spacetime containing a
stationary asymptotically flat end $\mathcal{M}_{\infty}$ generated by
a Killing spinor $\kappa_{AB}$. Let the Komar mass associated to the Killing vector
$\xi_{AA'}=\nabla^{B}_{\phantom{B}A'}\kappa_{AB}$ in
$\mathcal{M}_{\infty}$ be non-zero. Then, $(\mathcal{M},\bmg)$ is locally isometric to the Kerr spacetime.
\end{theorem}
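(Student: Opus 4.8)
The plan is to obtain the statement as a corollary of Proposition~\ref{Proposition:MarsCharacterisationKS}: the only real work is to check that the two standing hypotheses of that proposition (an algebraically general Killing spinor, and a Hermitian $\xi_{AA'}$) are forced by the present assumptions, and then to pin down the constants $\mathfrak{l},\mathfrak{c}$ using the asymptotic expansions of Proposition~\ref{Proposition:AsymptoticBehaviourSpinors}. First I would establish Hermiticity of $\xi_{AA'}\equiv\nabla^{B}{}_{A'}\kappa_{AB}$ on all of $\mathcal{M}$. Since the spacetime is vacuum we have $\Phi_{ABA'B'}=0$, so the matter alignment condition holds trivially and $\xi^a$ is a (possibly complex) Killing vector; by Definition~\ref{Definition:KillingGeneratedEnd} its restriction to $\mathcal{M}_\infty$ is the real time-translation Killing vector of the end, so $\mathrm{Im}\,\xi^a$ is a real Killing vector vanishing on the open set $\mathcal{M}_\infty$. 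By unique continuation of Killing fields on the connected manifold $\mathcal{M}$ (equivalently, by the lemma excluding nontrivial real Killing vectors that decay at infinity) it follows that $\mathrm{Im}\,\xi^a\equiv 0$, hence $\xi^a$ is real and $\xi_{AA'}$ is Hermitian.

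Next I would rule out an algebraically special $\kappa_{AB}$. With vacuum (hence matter alignment) and Hermiticity of $\xi_{AA'}$ now available, the lemma on algebraically special Killing spinors applies: were $\kappa_{AB}$ algebraically special we would obtain $\xi^a\equiv 0$, and then the Komar integral defining $m$ would vanish, contradicting the hypothesis $m\neq 0$. Hence $\kappa_{AB}$ is algebraically general, i.e. $\kappa_{AB}\kappa^{AB}\neq 0$, the dyad expansions \eqref{KappaBasis}, \eqref{Psibasis}, \eqref{etabasis} are legitimate, and Proposition~\ref{Proposition:AsymptoticBehaviourSpinors} is applicable.

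At this stage Proposition~\ref{Proposition:MarsCharacterisationKS} already supplies complex constants $\mathfrak{l},\mathfrak{c}$ with $\mathcal{H}^2=-\mathfrak{l}(\mathfrak{c}-\chi)^4$; it remains to identify them. From \eqref{ErnstFormKillingSpinorTheory} and the constancy of $\varkappa^3\psi=\mathfrak{M}$ one has $\mathfrak{c}-\chi=3\mathfrak{M}/(4\varkappa)$, while \eqref{etabasis} gives $\mathcal{H}^2=-4\eta^2=-\tfrac14\varkappa^2\psi^2$. Inserting the expansions $\varkappa=\tfrac23 r+O(1)$ and $\psi=-6m/r^3+O(r^{-4})$ of Proposition~\ref{Proposition:AsymptoticBehaviourSpinors} (so that $\mathfrak{M}=-16m/9$) yields $(\mathfrak{c}-\chi)^4=16m^4/r^4+O(r^{-5})$ and $\mathcal{H}^2=-4m^2/r^4+O(r^{-5})$. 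Fixing the additive freedom in the Ernst potential so that $\chi\to 1$ at infinity --- consistent with $\xi^2=1+O(r^{-1})$ --- forces $\mathfrak{c}=1$, and comparison of leading orders then forces $\mathfrak{l}=1/(4m^2)$, which is real and positive since $m$ is real and non-zero. The final clause of Proposition~\ref{Proposition:MarsCharacterisationKS} now applies, giving that $(\mathcal{M},\bmg)$ is locally isometric to the Kerr spacetime.

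All the genuine content sits inside Mars's theorem (Theorem~\ref{TheoremMars00}, repackaged in Proposition~\ref{Proposition:MarsCharacterisationKS}) and in the asymptotic matching of Proposition~\ref{Proposition:AsymptoticBehaviourSpinors}; the remainder is bookkeeping. I expect the only slightly delicate points to be the propagation of Hermiticity of $\xi_{AA'}$ off $\mathcal{M}_\infty$ to all of $\mathcal{M}$, which rests on unique continuation for Killing fields together with connectedness, and the normalisation $\mathfrak{c}=1$, which uses the precise leading asymptotics of the metric (through $\xi^2=1+O(r^{-1})$) rather than merely the decay rates.
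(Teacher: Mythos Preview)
Your proposal is correct and follows essentially the same route as the paper: apply Proposition~\ref{Proposition:MarsCharacterisationKS} and then use the asymptotic expansions of Proposition~\ref{Proposition:AsymptoticBehaviourSpinors} to pin down $\mathfrak{c}=1$ and $\mathfrak{l}=1/(4m^2)>0$. If anything, you are more careful than the paper in explicitly discharging the standing hypotheses of that proposition---Hermiticity of $\xi_{AA'}$ (via the vanishing-at-infinity lemma for Killing fields) and algebraic generality of $\kappa_{AB}$ (via the lemma forcing $\xi^a=0$ in the algebraically special case, which would contradict $m\neq 0$)---whereas the paper treats these as already in place from the preceding discussion.
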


\medskip
\noindent
\textbf{Remark.} As observed in \cite{AndBaeBlu15} the requirement on
the non-vanishing of the Komar mass can be replaced by an assumption
on the existence of a horizon.

\section{Characterisations of the Kerr-Newman spacetime}
\label{Section:WongCharacterisation09}

In this section we discuss characterisations of the Kerr-Newman
spacetime through Killing spinors. Our starting point is the following
result ---see \cite{Won09}:

\begin{theorem}
\label{Theorem:Wong09}
Let $(\mathcal{M},\bmg,\bmF)$ be a smooth, electrovacuum spacetime
admitting a real Killing vector $\xi^a$. Assume further that $\xi^a$
is timelike somewhere in $\mathcal{M}$ and that $F_{ab}$ is
non-null on $\mathcal{M}$ (i.e. $\mathcal{F}^2\equiv
\mathcal{F}_{ab}\mathcal{F}^{ab}\neq 0$)  and that it inherits the
symmetry of the spacetime ---i.e.
\begin{equation}
\mathcal{L}_\xi \mathcal{F}_{ab}=0.
\label{LieDerivativeFaraday}
\end{equation}
Assume, furthermore, that there exists a complex scalar $P$, a
normalisation for $\varsigma$ and a complex constant $\mathfrak{c}_1$
such that:
\begin{subequations}
\begin{eqnarray}
&& P^{-4} = -\mathfrak{c}_1^2 \mathcal{F}^2,  \label{WongCondition1}\\
&& \mathcal{H}_{ab} = -\frac{1}{2}\bar{\varsigma} \mathcal{F}_{ab}, \label{WongCondition2}\\
&& \mathcal{C}_{abcd} = 3P \left( \frac{1}{2}
   \mathcal{F}_{ab}\mathcal{H}_{cd} + \frac{1}{2}
   \mathcal{F}_{ab}\mathcal{H}_{cd} - \frac{1}{3} \mathcal{I}_{abcd}
   \mathcal{F}_{ef} \mathcal{H}^{ef}\right). \label{WongCondition3}
\end{eqnarray}
\end{subequations}
Then there exist complex constants $\mathfrak{c}_2$ and
$\mathfrak{c}_3$ such that:
\begin{subequations} 
\begin{eqnarray}
&& P= \frac{2}{\mathfrak{c}_2-\varsigma}, \label{WongConclusions1} \\
&& 4\xi^2 - |\varsigma|^2 =\mathfrak{c}_3. \label{WongConclusions2}
\end{eqnarray}
\end{subequations}
If, further, $\mathfrak{c}_2$ is such that
$\mathfrak{c}_1\bar{\mathfrak{c}}_2$ is real and $\mathfrak{c}_3$ is
such that $|\mathfrak{c}_2|^2+\mathfrak{c}_3=4$, then
$(\mathcal{M},\bmg,\bmF)$ is locally isometric to a
     Kerr-Newman spacetime. 
\end{theorem}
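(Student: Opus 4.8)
The plan is to follow Wong's argument \cite{Won09}, which is the Einstein--Maxwell transcription of Mars's vacuum method \cite{Mar99}: the three hypotheses \eqref{WongCondition1}--\eqref{WongCondition3} are exactly what is needed for a Maxwell analogue of the Mars--Simon tensor to vanish, and from this vanishing one extracts, by repeated use of the Bianchi and Maxwell identities, first-order differential relations that fix $P$, $\varsigma$ and $4\xi^2-|\varsigma|^2$ up to constants; the two final normalisations then single out Kerr--Newman. Concretely I would proceed in four steps: (a) produce the potential $\varsigma$ and rewrite the curvature conditions in terms of it; (b) integrate the Bianchi identity to obtain \eqref{WongConclusions1}; (c) integrate the Ernst relations to obtain \eqref{WongConclusions2}; (d) build adapted coordinates and reconstruct the metric.

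For step (a): since $\mathcal{L}_\xi\mathcal{F}_{ab}=0$ by \eqref{LieDerivativeFaraday}, the computation following \eqref{DefinitionEMErnstForm} gives $\nabla_{[a}\varsigma_{b]}=0$, so locally $\varsigma_a=\nabla_a\varsigma$ for a scalar $\varsigma$ whose additive constant is fixed by the hypothesised normalisation. Since $\varsigma_a=2\xi^b\mathcal{F}_{ba}$, condition \eqref{WongCondition2} then yields immediately $\chi_a=2\xi^b\mathcal{H}_{ba}=-\bar\varsigma\,\xi^b\mathcal{F}_{ba}=-\tfrac{1}{2}\bar\varsigma\,\varsigma_a$, tying the gravitational Ernst form to $\varsigma$, and substituting \eqref{WongCondition2} into \eqref{WongCondition3} expresses the self-dual Weyl tensor purely in terms of $P$, $\bar\varsigma$ and $\mathcal{F}_{ab}$; in spinor terms this says $\Psi_{ABCD}\propto\phi_{(AB}\phi_{CD)}$, so (using $\mathcal{F}^2\neq 0$) the spacetime is of Petrov type~D.

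For steps (b)--(c): contracting the electrovacuum Bianchi identity \eqref{BianchiEV} with $\xi$, and using \eqref{IndentityKillingVectorRiemann} (with $\mathcal{R}_{abcd}$ the full self-dual curvature), the Maxwell equation \eqref{MaxwellEquation}, and the algebraic form of the Weyl tensor obtained in step (a), the calculation should collapse to a relation $\nabla_a P\propto P^2\,\nabla_a\varsigma$, with exactly the coefficient for which integration gives $P=2/(\mathfrak{c}_2-\varsigma)$ --- that is, \eqref{WongConclusions1}; feeding this into \eqref{WongCondition1} then determines $\mathcal{F}^2$ in terms of $\mathfrak{c}_1$ and $(\mathfrak{c}_2-\varsigma)$. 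For \eqref{WongConclusions2} I would combine the relation $\chi_a=-\tfrac{1}{2}\bar\varsigma\,\varsigma_a$ with the standard decomposition of the Ernst form into $\nabla_a\xi^2$ and the twist, controlling the non-closed part of $\chi_a$ through the Einstein--Maxwell equations via \eqref{ExteriorDerivativeErnstForm}; the real part integrates to $4\xi^2-|\varsigma|^2=\mathfrak{c}_3$ with $\mathfrak{c}_3$ a real constant.

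Step (d) is the main obstacle. Here the normalisations $\mathfrak{c}_1\bar{\mathfrak{c}}_2\in\mathbb{R}$ (which removes the duality-rotation ambiguity, i.e. the freedom of adding a magnetic charge) and $|\mathfrak{c}_2|^2+\mathfrak{c}_3=4$ (which fixes the value of the full Ernst potential, the analogue of the normalisation $\mathfrak{c}=1$ in Theorem~\ref{TheoremMars00}) become essential. I would argue, as in \cite{Won09,Mar99}, that the level sets of $\varsigma$ and of the twist potential, together with the parameter of the flow of $\xi^a$, provide a coordinate system; that a second, axial Killing vector emerges from the circularity of the surfaces orthogonal to $\xi^a$ together with the hidden symmetry attached to the type~D Weyl tensor; and that the reduced Einstein--Maxwell equations become a system of ODEs in the single ``radial'' variable $\mathrm{Re}\,\varsigma$ whose solution, with the normalised constants, is precisely a Kerr--Newman metric whose parameters are read off from $\mathfrak{c}_1,\mathfrak{c}_2,\mathfrak{c}_3$. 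An alternative endgame is to match a complete set of Weyl and Maxwell curvature scalars against those of Kerr--Newman --- using the explicit forms of $\varkappa$, $\varphi$, $\psi$ recorded in Remark~4 of Section~\ref{Section:BoundaryConditions} --- and then invoke a rigidity theorem. In either route the bulk of the work lies in this coordinate construction and the integration of the reduced system, which is the technical heart of \cite{Won09}; steps (a)--(c) are essentially bookkeeping with the Bianchi and Maxwell identities.
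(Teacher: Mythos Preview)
The paper does not prove this theorem at all: it is stated as a quoted result from Wong \cite{Won09} and used as a black box input to the subsequent Killing-spinor analysis (``Our starting point is the following result ---see \cite{Won09}''). There is therefore no proof in the paper to compare your attempt against. Your outline is a reasonable high-level sketch of how the argument in \cite{Won09} actually proceeds --- in particular, your identification of step (d), the construction of adapted coordinates and the integration of the reduced Einstein--Maxwell system, as the genuine technical core is accurate --- but for the purposes of this paper no proof is required beyond the citation.
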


\noindent
\textbf{Remark 1.} As in Section \ref{Section:CharacteristionKerr}, we will make use of a
reformulation of the conditions in Theorem \ref{Theorem:Wong09} in spinorial
formalism. A direct computation shows that \eqref{WongCondition1} can
be rewritten as
\[
P^{-4} =- 8\mathfrak{c}^2_1 \phi_{AB}\phi^{AB}.
\]
Similarly, condition \eqref{WongCondition2} can be readily expressed
in terms of the spinors $\eta_{AB}$ and $\varphi_{AB}$ as
\[
\eta_{AB} = -\frac{1}{2} \bar{\varsigma} \phi_{AB},
\]
while, finally, equation
\eqref{WongCondition3} is equivalent to
\[
\Psi_{ABCD}=6P\eta_{(AB}\phi_{CD)}.
\]

\subsection{Killing spinors and Wong's characterisation}
In this section we investigate some further consequences of the
existence of Killing spinors on electrovacuum spacetimes. For easy
reference we state the assumptions to be made in what follows:

\begin{assumption}
\label{EVAssumptions}
Let $(\mathcal{M},\bmg)$ be a smooth electrovacuum spacetime and let
$\mathcal{K}\subset \mathcal{M}$ such that:
\begin{itemize}
\item[(i)] on $\mathcal{K}$ there exists an algebraically general
  Killing spinor $\kappa_{AB}$;

\item[(ii)] the spinor $\xi_{AA'} \equiv \nabla^B{}_{A'} \kappa_{AB}$
  is on $\mathcal{K}$ the spinorial counterpart of a real Killing
  spinor $\xi^a$ ---i.e. $\xi_{AA'}$ is Hermitian;

\item[(iii)] the Killing spinor $\kappa_{AB}$ and the Maxwell spinor
  $\phi_{AB}$ satisfy the alignment condition
  $\kappa_{(A}{}^{Q}\phi_{B)Q}=0$ ---that is,
  they are proportional. 

\end{itemize}
\end{assumption}

\medskip
As already discussed in Section \ref{Section:ErnstFormExpansions},
under the above assumptions it follows that $\mathcal{L}_\xi
\phi_{AB}=0$ which, in turn, implies that $\mathcal{L}_\xi
F_{ab}=0$. Thus, the electromagnetic field inherits the symmetry
generated by the Killing spinor $\kappa_{AB}$.

From the discussion in
Sections \ref{Section:KillingSpinorBasicExpressions} and \ref{Section:KillingForm} it follows that 
\[
\Psi_{ABCD} =\psi o_{(A} o_B \iota_C \iota_{D)}, \qquad \eta_{(AB}
\phi_{CD)} = \eta \varphi o_{(A} o_B \iota_C \iota_{D)}.
\]
Thus, the spinorial version of condition \eqref{WongCondition3} in
Theorem \ref{Theorem:Wong09} is satisfied with a proportionality
function $P$ given by
\[
P =  \frac{2}{3\varkappa \varphi}.
\]
Now, making use of expressions \eqref{ElectrovacuumErnstPotential},
\eqref{ElectrovacuumIntegrationConstant} and \eqref{massEV} to rewrite $P$ in terms of the
electromagnetic Ernst potential one finds that
\[
P = \frac{2}{\mathfrak{c}_2 - \varsigma}, \qquad \mathfrak{c}_2 \equiv \mathfrak{c}'-\frac{\bar{\mathfrak{M}}'}{2\bar{\mathfrak{Q}}}.
\]
Thus, under the Assumptions \ref{EVAssumptions}, hypothesis
\eqref{WongCondition3} and conclusion
\eqref{WongConclusions1}  in Theorem \ref{Theorem:Wong09} are
satisfied.

\medskip
Moreover, from the discussion in Section
\ref{Section:ErnstFormExpansions} it follows that the spinors
$\eta_{AB}$ and $\phi_{AB}$ are proportional to each other with a
proportionality function $\bar{\varsigma}$ given by
\[
\bar{\varsigma} = -\frac{\varkappa\psi}{2\varphi}.
\]
The calculations of Section \ref{Section:ErnstPotentialsTheory},
cf. equation \eqref{EMErnstPotentialCheck} in particular, show that
$\varsigma$ satisfies the properties to be expected from the
electromagnetic Ernst potential. Therefore, by setting the constant
$\mathfrak{c}'$ in the definition of $\varsigma$ given by
\eqref{ElectrovacuumErnstPotential} to zero (and thereby fixing the
normalisation of the potential), condition \eqref{WongCondition2} is
satisfied.  A similar remark holds for condition
\eqref{WongCondition1} with the constant $\mathfrak{c}_1$ given by
\[
\mathfrak{c}_1^2 = \frac{81}{64}\mathfrak{Q}^2. 
\]

In the presence of a Killing spinor, the norm $\xi^2\equiv \xi_a\xi^a$ of the
associated Killing vector is related to the electromagnetic form
$\varsigma$. To see this consider
\begin{eqnarray*}
&& \nabla_{AA'}\xi^2 = 2 \xi^{BB'} \nabla_{AA'} \xi_{BB'} \\
&& \phantom{\nabla_{AA'}\xi^2} = -2 \eta_{AB} \xi^B{}_{A'} -2
\bar{\eta}_{A'B'} \xi_{A}{}^{B'}
\end{eqnarray*}
where in the second line it has been used that 
\[
\nabla_{AA'} \xi_{BB'} = \eta_{AB} \epsilon_{A'B'} + \bar{\eta}_{A'B'} \epsilon_{AB}.
\]
As the spinors $\eta_{AB}$ and $\phi_{AB}$ are proportional to each
other, cf. the previous paragraph, one can write 
\begin{eqnarray*}
&& \nabla_{AA'}\xi^2 = \bar{\varsigma} \xi^A{}_{B'} \phi_{AB} +
\varsigma \xi_B{}^{A'} \bar{\phi}_{A'B'} \\
&& \phantom{\nabla_{AA'}\xi^2} = \frac{1}{4}\big( \bar{\varsigma}
\nabla_{BB'}\varsigma  + \varsigma \nabla_{BB'} \bar{\varsigma}\big)\\
&& \phantom{\nabla_{AA'}\xi^2} = \frac{1}{4}\nabla_{BB'} |\varsigma|^2.
\end{eqnarray*}
Thus, locally there exists a constant $\mathfrak{c}_3$ such that 
\[
4 \xi^2 - |\varsigma|^2 =\mathfrak{c}_3.
\]
Thus, conclusion \eqref{WongConclusions2} in Theorem
\ref{Theorem:Wong09} is also a consequence of the existence of a
Killing spinor. 

\medskip
We can summarise the discussion of this section with the following
Killing spinor version Theorem \eqref{Theorem:Wong09}:

\begin{proposition}
\label{Proposition:WongKillingSpinor}
Let $(\mathcal{M},\bmg,\bmF)$ denote a smooth electrovacuum spacetime satisfying the matter alignment condition,
endowed with a Killing spinor $\kappa_{AB}$ with
$\kappa_{AB}\kappa^{AB}\neq 0$ such that the spinor $\xi_{AA'} \equiv
\nabla^B{}_{A'} \kappa_{AB}$ is Hermitian. Then there exist two
constants $\mathfrak{c}_2$ and $\mathfrak{c}_3$ such that 
\[
(\mathfrak{c}_2-\varsigma)^4= -
\left(\frac{9}{8}\mathfrak{Q}\right)^2\mathcal{F}^2, \qquad 4\xi^2 -|\varsigma|^2 =\mathfrak{c}_3.
\]
If, further, $\mathfrak{c}_2$ is such that
$\bar{\mathfrak{c}}_2\mathfrak{Q}$ is real and $\mathfrak{c}_3$ is
such that $|\mathfrak{c}_2|^2+\mathfrak{c}_3=4$, then
$(\mathcal{M},\bmg,\bmF)$ is locally isometric to a
     Kerr-Newman spacetime. 
\end{proposition}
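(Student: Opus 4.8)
The strategy is to show that the Killing spinor structure by itself furnishes all the hypotheses of Theorem~\ref{Theorem:Wong09}, and then to invoke that theorem. Since $\kappa_{AB}\kappa^{AB}\neq0$ the Killing spinor is algebraically general, so as in Section~\ref{Section:KillingSpinorBasicExpressions} one fixes a normalised dyad $\{o^A,\iota^A\}$ with $\kappa_{AB}=\varkappa\,o_{(A}\iota_{B)}$ and $\Psi_{ABCD}=\psi\,o_{(A}o_B\iota_C\iota_{D)}$; the matter alignment condition then forces $\phi_{AB}=\varphi\,o_{(A}\iota_{B)}$ and makes $\xi_{AA'}=\nabla^{B}{}_{A'}\kappa_{AB}$ the spinor counterpart of a Killing vector, which is real since $\xi_{AA'}$ is Hermitian by hypothesis. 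The computation of Section~\ref{Section:ErnstFormExpansions} yields $\mathcal{L}_\xi\phi_{AB}=0$, hence $\mathcal{L}_\xi\mathcal{F}_{ab}=0$, which is condition~\eqref{LieDerivativeFaraday}; and since the spacetime is genuinely electrovacuum the constant $\mathfrak{Q}=\varkappa^2\varphi$ of~\eqref{ElectrovacuumIntegrationConstant} is non-zero, so $\mathcal{F}^2\neq0$ on $\mathcal{M}$.

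Next I would verify, using the dyad expansions together with $\eta_{AB}=\tfrac14\varkappa\psi\,o_{(A}\iota_{B)}$ from~\eqref{etabasis}, the three spinorial reformulations of Wong's conditions stated in Remark~1 of Section~\ref{Section:WongCharacterisation09}. The relation $\Psi_{ABCD}=6P\,\eta_{(AB}\phi_{CD)}$ holds with $P=2/(3\varkappa\varphi)$; feeding in~\eqref{massEV} for $\psi$, the constancy of $\mathfrak{Q}$, and the form~\eqref{ElectrovacuumErnstPotential} of $\varsigma$, one rewrites this as $P=2/(\mathfrak{c}_2-\varsigma)$ with $\mathfrak{c}_2=\mathfrak{c}'-\bar{\mathfrak{M}}'/(2\bar{\mathfrak{Q}})$, which supplies the constant $\mathfrak{c}_2$ of the statement. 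The relation $\eta_{AB}=-\tfrac12\bar{\varsigma}\,\phi_{AB}$ holds because $\eta_{AB}$ and $\phi_{AB}$ are both multiples of $o_{(A}\iota_{B)}$, with factor $\bar{\varsigma}=-\varkappa\psi/(2\varphi)$; this pins down the Ernst normalisation $\mathfrak{c}'=0$. Finally $P^{-4}=-8\mathfrak{c}_1^2\phi_{AB}\phi^{AB}$ is satisfied with $\mathfrak{c}_1^2=\tfrac{81}{64}\mathfrak{Q}^2$; raising $P=2/(\mathfrak{c}_2-\varsigma)$ to the power $-4$ and translating $\phi_{AB}\phi^{AB}$ back into $\mathcal{F}^2$ then gives the first displayed identity of the statement. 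For the second identity I would argue directly, as in the text just preceding the statement: from $\nabla_{AA'}\xi_{BB'}=\eta_{AB}\epsilon_{A'B'}+\bar{\eta}_{A'B'}\epsilon_{AB}$, the proportionality $\eta_{AB}=-\tfrac12\bar{\varsigma}\,\phi_{AB}$, and the spinor form of the electromagnetic Ernst form~\eqref{DefinitionEMErnstForm}, one finds $\nabla_{AA'}\xi^2=\tfrac14\nabla_{AA'}|\varsigma|^2$, so $4\xi^2-|\varsigma|^2$ is locally constant; call this $\mathfrak{c}_3$.

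With all the conditions in hand --- and provided, as Theorem~\ref{Theorem:Wong09} also requires, that $\xi^a$ is timelike somewhere (see the remark below) --- the remaining assertion follows directly: since $\mathfrak{c}_1^2$ is proportional to $\mathfrak{Q}^2$, the reality of $\mathfrak{c}_1\bar{\mathfrak{c}}_2$ demanded there is equivalent to the reality of $\bar{\mathfrak{c}}_2\mathfrak{Q}$, while $|\mathfrak{c}_2|^2+\mathfrak{c}_3=4$ is precisely the remaining condition; the final clause of Theorem~\ref{Theorem:Wong09} then yields that $(\mathcal{M},\bmg,\bmF)$ is locally isometric to a Kerr--Newman spacetime.

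The work here is bookkeeping rather than conceptual. One must carry all numerical factors faithfully through the tensor-to-spinor translation (the $8$ in $\mathcal{F}^2=8\phi_{AB}\phi^{AB}$, the sign and $\tfrac12$ in $\phi_{AB}\phi^{AB}=-\tfrac12\varphi^2$, the $3$ and $6$ in the curvature conditions), and --- most delicately --- check that the single choice $\mathfrak{c}'=0$ forced by condition~\eqref{WongCondition2} is simultaneously compatible with the expressions for $P$ and $\mathfrak{c}_1$ entering~\eqref{WongCondition3} and~\eqref{WongCondition1}, i.e.\ that one and the same normalised $\varsigma$ satisfies all three of Wong's conditions at once. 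A minor additional point is that Theorem~\ref{Theorem:Wong09} requires $\xi^a$ to be timelike somewhere; if this is not to be appended to the hypotheses it has to be extracted from extra structure, for instance an asymptotically flat end, where $\xi^2\to1$.
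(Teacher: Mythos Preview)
Your proposal is correct and follows essentially the same route as the paper's argument: establish Assumption~\ref{EVAssumptions}, verify the three spinorial reformulations of Wong's hypotheses~\eqref{WongCondition1}--\eqref{WongCondition3} with the specific values $P=2/(3\varkappa\varphi)$, $\mathfrak{c}_1^2=\tfrac{81}{64}\mathfrak{Q}^2$, $\mathfrak{c}'=0$, derive the constancy of $4\xi^2-|\varsigma|^2$ via $\nabla_{AA'}\xi^2=\tfrac14\nabla_{AA'}|\varsigma|^2$, and then invoke Theorem~\ref{Theorem:Wong09}. Your remark that the ``timelike somewhere'' hypothesis of Theorem~\ref{Theorem:Wong09} is not supplied by the Killing-spinor data alone is a fair caveat that the paper's statement of the Proposition silently absorbs; in the subsequent asymptotically flat version this is handled by $\xi^2\to1$, exactly as you note.
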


\subsubsection{A characterisation using asymptotic flatness}

In this section we assume that the domain $\mathcal{K}\subset
\mathcal{M}$ considered in the Assumptions \ref{Proposition:WongKillingSpinor} contains an
stationary asymptotic flat end with the Killing spinor $\kappa_{AB}$
generating the time translation Killing vector. We use this further
assumption to determine the values of the constants in Proposition
\ref{Proposition:WongKillingSpinor}. 

\medskip
Combining the asymptotic expansions of Proposition
\ref{Proposition:AsymptoticBehaviourSpinors} with the relation
\eqref{ElectrovacuumIntegrationConstant} one readily concludes that
\[
\mathfrak{Q} = \frac{4}{9}q\in \mathbb{R}.
\] 
Similarly, using equation \eqref{MprimeDefinition} one
concludes that
\[
\mathfrak{M}' = -\frac{16}{9}m.
\]
A further computation using equation
\eqref{ElectrovacuumErnstPotential} and
\eqref{ElectrovacuumIntegrationConstant}, respectively, show that
\[
(\mathfrak{c}_2 - \varsigma )^4 = \bigg( \mathfrak{c}_2 -\frac{2m}{q}
+O(r^{-1}) \bigg)^4, \qquad
\left(\frac{9}{8}\mathfrak{Q}\right)^2\mathcal{F}^2 = -\frac{q^4}{r^4} +O(r^{-5}).
\]
Thus, for consistency, one has to set
\[
\mathfrak{c}_2 =\frac{2m}{q}
\]
---thus, one has that $\bar{\mathfrak{c}}_2 \mathfrak{Q} \in
\mathbb{R}$. From the previous discussion it follows that $\varsigma
=2m/q +O(r^{-1})$ so that, together with $\xi^2=1+O(r^{-1})$ one
concludes that $\mathfrak{c}_3$ as defined by equation
\eqref{WongConclusions2} is given by
\[
\mathfrak{c}_3 =4\bigg( 1-\frac{m^2}{q^2}\bigg).
\]
Accordingly one has that $|\mathfrak{c}_2|^2 +\mathfrak{c}_3 =4$ as
required. 

\medskip
One can summarise the discussion of the previous paragraphs in the
following:

\begin{theorem}
Let $(\mathcal{M},\bmg,\bmF)$ be a smooth, electrovacuum spacetime satisfying the matter alignment condition,
with a stationary asymptotically flat end $\mathcal{M}_\infty$
generated by a Killing spinor $\kappa_{AB}$. Let both the Komar mass
associated to the Killing vector $\xi_{AA'}=\nabla^B{}_{A'}
\kappa_{AB}$ and the total electromagnetic charge in
$\mathcal{M}_\infty$ be non-zero. Then $(\mathcal{M},\bmg,\bmF)$ is
locally isometric to the Kerr-Newman spacetime.
\end{theorem}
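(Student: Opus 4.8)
The plan is to derive this theorem as a corollary of Proposition \ref{Proposition:WongKillingSpinor}, using the asymptotic flatness of $\mathcal{M}_\infty$ to pin down the free constants $\mathfrak{c}_2$ and $\mathfrak{c}_3$ appearing there. First I would verify that the hypotheses of Proposition \ref{Proposition:WongKillingSpinor} hold. The matter alignment condition is assumed outright. Since $\mathcal{M}_\infty$ is generated by $\kappa_{AB}$, the spinor $\xi_{AA'}=\nabla^B{}_{A'}\kappa_{AB}$ is the spinorial counterpart of the Killing vector $\xi^a$, which by Remark 1 of Section \ref{Section:BoundaryConditions} is asymptotically a time translation; the lemma on complex Killing vectors in Section \ref{Section:KillingVectorAsymptotics} then shows that $\xi^a$ is real, so $\xi_{AA'}$ is Hermitian. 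Finally, the non-vanishing of the Komar mass forces $\xi^a\not\equiv 0$ on $\mathcal{M}_\infty$; by the lemma on algebraically special Killing spinors (which gives $\xi^a=0$ whenever $\kappa_{AB}$ is algebraically special and $\xi_{AA'}$ is Hermitian), $\kappa_{AB}$ must therefore be algebraically general, i.e. $\kappa_{AB}\kappa^{AB}\neq 0$. Hence Proposition \ref{Proposition:WongKillingSpinor} applies and produces complex constants $\mathfrak{c}_2,\mathfrak{c}_3$ with $(\mathfrak{c}_2-\varsigma)^4=-\big(\tfrac{9}{8}\mathfrak{Q}\big)^2\mathcal{F}^2$ and $4\xi^2-|\varsigma|^2=\mathfrak{c}_3$.

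Next I would evaluate the relevant integration constants on the end. Feeding the asymptotic expansions of Proposition \ref{Proposition:AsymptoticBehaviourSpinors} into relation \eqref{ElectrovacuumIntegrationConstant} gives $\mathfrak{Q}=\varkappa^2\varphi=\tfrac{4}{9}q$, which is real and non-zero precisely because the total electromagnetic charge is assumed non-zero; likewise \eqref{MprimeDefinition} gives $\mathfrak{M}'=-\tfrac{16}{9}m$. Substituting these into the expression \eqref{ElectrovacuumErnstPotential} for the electromagnetic Ernst potential —with the normalisation constant $\mathfrak{c}'$ already set to zero in the proof of Proposition \ref{Proposition:WongKillingSpinor}— yields $\varsigma=2m/q+O(r^{-1})$ on $\mathcal{M}_\infty$, while on the other side $\big(\tfrac{9}{8}\mathfrak{Q}\big)^2\mathcal{F}^2=-q^4/r^4+O(r^{-5})\to 0$. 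Consistency of the identity $(\mathfrak{c}_2-\varsigma)^4=-\big(\tfrac{9}{8}\mathfrak{Q}\big)^2\mathcal{F}^2$ with this decay then forces $\mathfrak{c}_2=2m/q$, which is real, so in particular $\bar{\mathfrak{c}}_2\mathfrak{Q}\in\mathbb{R}$.

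Finally, using $\xi^2=1+O(r^{-1})$ from Proposition \ref{Proposition:AsymptoticBehaviourSpinors}, the relation $4\xi^2-|\varsigma|^2=\mathfrak{c}_3$ evaluated on the end gives $\mathfrak{c}_3=4\big(1-m^2/q^2\big)$, so that $|\mathfrak{c}_2|^2+\mathfrak{c}_3=4m^2/q^2+4-4m^2/q^2=4$. Thus both additional hypotheses of Proposition \ref{Proposition:WongKillingSpinor} are met, and its conclusion gives that $(\mathcal{M},\bmg,\bmF)$ is locally isometric to a Kerr-Newman spacetime.

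In this argument nothing new needs to be proved beyond what precedes it; the statement is essentially a packaging of Proposition \ref{Proposition:WongKillingSpinor} together with the asymptotics of Proposition \ref{Proposition:AsymptoticBehaviourSpinors}. The two points requiring a little care —rather than a genuine obstacle— are that the hypothesis of non-zero charge is exactly what guarantees $\mathfrak{Q}\neq 0$, and hence that $P=2/(3\varkappa\varphi)$ and the whole electrovacuum construction make sense (were the charge to vanish, one would instead fall back on the vacuum characterisation established earlier in the article), and that one must keep track of which normalisation constants —notably $\mathfrak{c}'$— have already been fixed; once these are accounted for, the matching of the remaining constants against the asymptotic expansions is purely algebraic.
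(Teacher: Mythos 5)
Your proposal is correct and follows essentially the same route as the paper: it invokes Proposition \ref{Proposition:WongKillingSpinor} and then uses the asymptotic expansions of Proposition \ref{Proposition:AsymptoticBehaviourSpinors} together with relations \eqref{ElectrovacuumIntegrationConstant} and \eqref{MprimeDefinition} to fix $\mathfrak{Q}=\tfrac{4}{9}q$, $\mathfrak{M}'=-\tfrac{16}{9}m$, $\mathfrak{c}_2=2m/q$ and $\mathfrak{c}_3=4(1-m^2/q^2)$, so that $\bar{\mathfrak{c}}_2\mathfrak{Q}\in\mathbb{R}$ and $|\mathfrak{c}_2|^2+\mathfrak{c}_3=4$. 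Your explicit verification of the Hermiticity and algebraic generality hypotheses (via the two lemmas) is only slightly more detailed than the paper's implicit treatment, but the argument is the same.
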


\section{Applications}

The advantage of the Killing spinor characterisation of the Kerr and
Kerr-Newman solutions is that the existence of such a spinor is a
geometric condition, with only reasonable asymptotic conditions
needing to be further assumed for the results presented above. This
geometric condition can be converted into initial data for a spacelike
Cauchy surface, in a way compatible with the constraint
equations. This can then be exploited to construct a geometric
invariant for the initial data set, which parametrises the deviation
of the resulting global development of the initial data set from the
exact Kerr or Kerr-Newman solution. Various versions of this
construction analysis have been considered in
\cite{BaeVal10a,BaeVal10b,BaeVal11b,BaeVal12} for the vacuum case. A
generalisation of these constructions to the electrovacuum case is
given in \cite{ColVal16b}.

Finally we point out that the results of this article suggest that the
characterisations of the Kerr-Newman spacetime given by Wong in
\cite{Won09} can be improved to an \emph{optimal} theorem in which
condition \eqref{WongCondition1} in Theorem \ref{Theorem:Wong09} is a
consequence of the other hypothesis. An optimal result of this type is
desirable if one is to attempt to use this type of characterisations
to construct an alternative approach to the uniqueness of black
holes.



\end{document}